\newtheorem{rem}{Remark}[section]
\newtheorem{theorem}{Theorem}[section]
\newtheorem{prop}{Proposition}[section]
\newtheorem{corollary}{Corollary}[section]
\newcommand{\nZ}{{n \in \mathbb{Z}}}
\newcommand{\im}{\mathrm{Im}}
\begin{document}
\title{Persistence of integrable wave dynamics in the Discrete Gross--Pitaevskii equation: the focusing case}
\author{G. Fotopoulos}
\email{georgios.fotopoulos@actvet.gov.ae}
\affiliation{Academic Support Department, Abu Dhabi Polytechnic, P.O. Box 111499, Abu Dhabi, UAE}
\author{N.\,I. Karachalios}
\email{karan@uth.gr}
\affiliation{Department of Mathematics, University of Thessaly, Lamia 35100, Greece}
\author{V. Koukouloyannis}
\email{vkouk@aegean.gr}
\affiliation{Department of Mathematics, University of the Aegean, Karlovasi, 83200 Samos, Greece}
\begin{abstract}
Expanding upon our prior findings on the proximity of dynamics between integrable and non-integrable systems within the framework of nonlinear Schrödinger equations, we examine this phenomenon for the focusing Discrete Gross–Pitaevskii equation in comparison to the Ablowitz–Ladik lattice. The presence of the harmonic trap necessitates the study of the Ablowitz--Ladik lattice  in weighted spaces.  We establish estimates for the distance between solutions in the suitable metric, providing a comprehensive description of the potential evolution of this distance for general initial data. These results apply to a broad class of nonlinear Schrödinger models, including both discrete and partial differential equations. For the Discrete Gross--Pitaevskii equation, they guarantee the long-term persistence of small-amplitude bright solitons, driven by the analytical solution of the AL lattice, especially in the presence of a weak harmonic trap.  Numerical simulations confirm the theoretical predictions about the proximity of dynamics between the systems over long times. They also reveal that the soliton exhibits remarkable robustness, even as  the effects of the weak harmonic trap become increasingly significant, leading to the soliton’s curved orbit.
\end{abstract}	
\maketitle

\section{Introduction}
	The Gross--Pitaevskii equation (GPE) is a fundamental equation in the field of condensed matter physics, particularly in the study of Bose--Einstein condensates (BECs). It is a nonlinear Schr\"odinger (NLS) equation, including an external trapping (magnetic or optical) potential, and the nonlinearity is introduced by the interatomic interactions \cite{pitaevskii-stringari, Kevre_BEC, Kevre_BEC2, Peli}.  Moreover, the study of GPE  gives rise to significant nonlinear effects and states, such as solitons and vortices, which play a pivotal role in various contexts, spanning from optical pulses (temporal solitons) to stationary beams (spatial solitons) of nonlinear optics \cite{newell_moloney}, and to soliton matter waves in macroscopic quantum systems. 
	
Let us recall  a short description of the GPE equation in one-spatial dimension (1D). The 1D-model describes the dynamics  of an atomic BEC confined in a highly anisotropic harmonic trap, $V_{HT}$, with frequencies $\omega_x$ and $\omega_{\perp} \equiv \omega_y=\omega_z$, such that $\omega_x \ll \omega_{\perp}$. Then, in the mean-field approximation, and for sufficiently low temperatures (so that thermal and quantum fluctuations can be considered negligible ),  \cite{Kevre_BEC}, \cite{Kevre_BEC2}, the BEC dynamics is described by the 1D, GPE equation:
\begin{eqnarray}
	i\hbar\frac{\partial\Psi}{\partial t}=-\frac{\hbar^2}{2m} \frac{\partial^2 \Psi}{\partial x^2}
	+ V_{ext}(x)\Psi +g_{1D}|\Psi|^2\Psi,\;\;x\in\mathbb{R}.
	\label{gpe}
\end{eqnarray}
In equation \eqref{gpe},  $\Psi(x, t)$ is the macroscopic BEC wavefunction normalized to the number of atoms, namely $\int |\Psi|^2  dx = N$ and $m$ is the atomic mass. The constant  $g_{\rm 1D}$ is the effectively 1D coupling constant which is proportional to the atomic scattering length. At this point, it is important to notice that when $g_{\rm 1D}>0$ the interatomic interactions are repulsive and the model is defocusing (repulsive). When  $g_{\rm 1D}<0$ the interatomic interactions are attractive, and the model is focusing.  The external potential, $V_{ext}(x)$, in Eq.~(\ref{gpe}) takes the form
\begin{eqnarray}
	V_{ext}(x)\equiv V_{HT}(x) = \frac{1}{2}m \omega_x^2 x^2.
	\label{Vext}
\end{eqnarray}
Dimensionless GPE  equations may have the form:
\begin{equation}
	i \frac{\partial \phi}{\partial t} = - \frac{\partial^2 \phi}{\partial x^2}+ \Omega^2 x^2\phi
	+ g|\phi|^{2}\phi,\;\;x\in\mathbb{R}.
	\label{dim1dgpe}
\end{equation}
The parameter $\Omega$ is the {\em strength of the harmonic trap}. In the  context of nonlinear optics \cite{nlox2}, \cite{HorNI}, $\phi$ is the normalized electric field envelope, $t$ denotes the propagation direction. The parameter $\Omega$ accounts for the change in the refractive index of the medium in the $x$-direction (transverse to the propagation). Equation \eqref{dim1dgpe} motivates us to consider in this work, its spatial discretization, called as the discrete GPE equation (DGPE): 
	\begin{equation}\label{GPE}
			i \dot{\phi}_n + \frac{1}{h^2} (\phi_{n+1} - 2\phi_n + \phi_{n-1})  - \Omega^2(hn)^2\phi_n -g |\phi_n |^2 \phi_{n} = 0,\,\,\,n\in {\mathbb{Z}},
	\end{equation}
where $h$ denotes the 1D-lattice spacing. The relevance of the DGPE equation in the context of BECs, through the so called tight-binding approximation,  is discussed in detail in \cite{JMP2011} and the references therein. For other numerous applications in distinct physical contexts we refer to \cite{KevreDNLS}, \cite{BorisAll}.

The primary motivation for studying the DGPE equation \eqref{GPE} stems from the following well-known observation: In the limit of small external potentials   $\Omega\rightarrow 0$, \cite{Kevre_BEC2}, \cite{Peli}, the GPE partial differential equation (pde) becomes {\em the completely integrable NLS equation. However, when $\Omega\rightarrow 0$,  the DGPE transforms into the non-integrable DNLS equation. This represents a significant difference between the GPE pde and its DGPE nonlinear lattice counterpart.} Nevertheless, in the realm of nonlinear lattices, the Ablowitz--Ladik (AL) equation 
	\begin{equation}\label{AL} 
		i \dot{\psi}_n + \frac{1}{h^2} (\psi_{n+1} - 2\psi_n + \psi_{n-1}) + \mu | \psi_n|^2 (\psi_{n+1} + \psi_{n-1}) = 0,\,\,\,n\in {\mathbb{Z}},
	\end{equation}
is the famous fully integrable system.  The DGPE equation \eqref{GPE}, as another significant model, is particularly suited for investigation in line with our previous work \cite{DJNa,DJNb,JDE2024}, which explores the potential persistence of integrable wave dynamics from integrable Hamiltonian systems to non-integrable ones. By once again using the AL equation as the fundamental integrable model, we establish proximity estimates between the solutions of the DGPE equation and the AL lattice.

We restrict the present study to  the {\em the focusing} case  of DGPE ($g<0$), supplementing both systems with zero boundary conditions at infinity 
\begin{equation}
	\label{Vbc}
	\lim_{|n|\rightarrow\infty} \phi_n(t)=	\lim_{|n|\rightarrow\infty} \psi_n(t) =0,\;\;\mbox{for all $t\geq 0$}.
\end{equation}
We note that the defocusing case $(g>0$) is of particular interest and physical significance due to its relevance in the study of discrete dark solitons, which play a prominent role in the dynamics of BECs \cite{Kevre_BEC2} and in nonlinear optics \cite{BorisAll}. Investigating such solutions necessitates supplementing the DGPE system with non-zero boundary conditions at infinity. This represents a fundamentally different boundary value problem compared to the zero boundary conditions \eqref{Vbc}, as highlighted in \cite{JNLS2024}. The associated issues for the DGPE are  addressed in an ongoing work. Thus, in the
present work, we will consider the bright soliton dynamics induced by the analytical solution of the focusing AL lattice \eqref{AL} ($\mu>0$) as the major example to explore the potential persistence of integrable dynamics in the DGPE \eqref{GPE}. Specifically, we consider the bright soliton given by
\begin{align}
	\label{ALBS}
	\psi_n(t) &= A\mathrm{sech}\big(b(x_n - v t)\big) e^{i(a x_n - \omega t)},\;\;\;A=\frac{\sinh(bh)}{h\sqrt{\mu}},\\
	\omega &=-2 \frac{\cos(ah)\cosh(bh)-1}{h^2},\nonumber\\
	v &=\frac{2\sin(ah)\sinh(bh)}{bh^2},\;\;x_n=hn.\nonumber
\end{align}
We note, however, that the proximity estimates between the solutions of the compared focusing systems are general and apply to any solutions that satisfy the boundary conditions \eqref{Vbc}. To derive these estimates (following the approach in \cite{DJNa}, \cite{DJNb}), additional novel techniques are introduced to enrich their proofs, due to the presence of the external potential in the DGPE. According to the boundary conditions \eqref{Vbc} the natural phase spaces for studying the proximity of the dynamics are the standard sequence spaces $l^p$, $1\leq p\leq\infty$.  However, the approach requires the introduction of {\em weighted spaces} introduced by harmonic potential, and estimates for the solutions of the AL lattice in these weighted spaces. Despite the system's integrability, addressing the AL system in weighted sequence spaces poses significant challenges, as discussed in \cite{Yamane1, China1} for the focusing case and \cite{Yamane2, China2} for the defocusing case. More broadly, analyzing nonlinear lattices in weighted spaces is crucial, as it can establish the existence of solutions with a prescribed rate of spatial localization. Therefore, we present a general global existence result for the Salerno lattice \cite{Salerno}, \cite{Cai94}, \cite{R3}
\begin{equation}
	i\dot{\psi_n}+ \frac{1}{h^2} (\psi_{n+1} - 2\psi_n + \psi_{n-1})+\mu|\psi_{n}|^2)(\psi_{n+1} +\psi_{n-1})+
	\gamma|\psi_n|^2\psi_n=0,\,\,\,n\in {\mathbb{Z}},\label{eq:Salerno}
\end{equation}
which  encompasses both the DNLS (for $\mu=0$) and AL (for $\gamma=0$) systems  and we will comment explicitly for the  AL  lattice which is our primary concern in our applications.

The present study has motivated a comprehensive description of the potential long-term behavior of the distance between the solutions of the DGPE and the AL lattice. The theoretical results establish a general framework applicable to a broad class of NLS-type systems. As demonstrated in \cite{DJNa, DJNb, JDE2024}, this framework addresses the fundamental question of stability \cite{book2}, comparing the dynamics of an integrable system with those of its non-integrable counterparts. The distance always evolves within a trapping region defined by the graphs of functions that describe the time-dependent upper estimates for its transient growth, horizontal lines corresponding to its global, uniform-in-time bounds, and is unbounded only in the time direction. This characteristic, which will be discussed in detail in Section \ref{SecIIIB}, is typical of systems whose solutions remain uniformly bounded in time by quantities that depend only on their their initial data. It follows as a consequence of the triangle inequality when applied to the distance metric.  The estimates, which depend on the initial data and on $\Omega$, may define narrow regions when the initial data and $\Omega$ are small. Small variations in the distance within these regions suggest a potential proximity between the dynamics of the non-integrable system and the integrable one. This theoretical result explains the long-time persistence of small-amplitude AL solitons in the dynamics of the DGPE system,  for small strengths $\Omega$ of the harmonic trap. Numerical findings further support this persistence, showing that the deviation of the distance within these narrow regions is significantly smaller than theoretically predicted. This finding is consistent with results from our previous works \cite{DJNa, DJNb, JDE2024}. 
\textcolor{blue}{}

Additionally, it is noteworthy that the persisting soliton exhibits remarkably robust evolution, even as its orbit becomes curved due to the effects of the weak harmonic trap. In the continuous limit \cite{DJFNon}, this effect can be explained through the derivation of equations of motion for the center of the GPE soliton. However, in the discrete case, deriving such equations may not be as straightforward. At this point, we should remark that although we are not aware of  works which specifically addressed the stability of the bright soliton solution \eqref{ALBS} in the AL lattice  in a rigorous form, the analysis of \cite[Chapter 2]{KevreDNLS} (which also relies on the integrability of the model and the translational invariance of \eqref{ALBS}), the results of \cite{ABH} and \cite[Chapter 4, Sec. 4.2]{BMB} (discussing the Vakhitov–Kolokolov criterion for the Salerno lattice) suggest that it is linearly and nonlinearly (orbitally) stable. An interesting direction for further study is to prove, following the transitivity arguments of \cite[Theorem 2.3]{JDE2024}, that the observed evolution of the AL soliton \eqref{ALBS} in the DGPE remains robust in the regime of proximal dynamics.

The paper is structured as follows. In  Section~\ref{SecII}, we discuss the functional set-up of the problem giving an emphasis in the weighted sequence spaces and the estimates of the solutions of the AL lattice in these spaces. These estimates is an essential ingredient for the proof of the proximity estimates between the solutions of the DGPE and the AL lattice, given in Section~\ref{SecIII}. Section \ref{SecIV} is devoted to the numerical results.   Finally, Section~\ref{SecV}, summarizes our findings with some further remarks on ongoing and future works.
\section{Functional set up and estimates for the Salerno lattice in weighted sequence spaces}
The section is divided in two parts. In the first part we discuss the sequence spaces required of the study of the initial-boundary value problems of the DGPE and the AL lattices when supplemented with the vanishing boundary conditions \eqref{Vbc}. In the second part we prove a global existence result for the solutions of the Salerno lattice \eqref{eq:Salerno} with initial data in $l^2_w$ and we comment specifically on the AL lattice under the lights of the results of \cite{China1}.
\label{SecII}

\subsection{Functional set up: standard and weighted sequence spaces and the discrete Laplacian in weighted spaces}
Due to the boundary conditions \eqref{Vbc}, the functional setting for the DGPE and AL initial-boundary value problems consists of the standard  sequence spaces 
\begin{equation}
	l^p=\Big\{ u=(u_n)_{\nZ}\,\in {\mathbb{C}},\quad \| u\|_{l^p}^p=h\sum_{\nZ}|u_n|^p\Big\}.
\end{equation}
The Hilbert space $l^2$ is endowed with the inner product  
$$
(u,v)_{l^2}=h\mathrm{Re}\sum_{\nZ}u_n\overline{v}_n,\quad u,\,v\in l^2.
$$
The continuous embeddings
\begin{equation}
	l^r\subset l^s,\quad \|w\|_{l^s}\le \|w\|_{l^r}, \quad1 \le r\le s \le \infty,\label{ineq_l}
\end{equation}
will be thoroughly used in the sequel. Let us also recall some basic properties of the linear difference operators involved in the models. For any $u \in l^2$, the 1D-discrete Laplacian $\Delta_d:l^2\to l^2$,
\begin{equation}
	\label{dL}
	(\Delta_d u)_{\nZ}=u_{n+1}-2u_n+u_{n-1},
\end{equation}
is bounded, self-adjoint on $D(\Delta_d)=l^2$ and $\Delta_d\le 0$, due to the relations
\begin{align*}	\label{prop}
	(\Delta_{d} u,v)_{l^2}&=(u,\Delta_{d}v)_{l^2}, \quad u,v\in l^2,\\
	(\Delta_{d}u,u)_{l^2}&=-h\sum_{\nZ}|u_{n+1}-u_n|^2\leq 0,\\
	\|\Delta_d u\|_{l^2}^2&\leq c\|u\|_{l^2}^2,\;\;c>0.
\end{align*}

We will also consider the weighted sequence space $l_w^2$
\begin{align*}
	l^2_w  := \Big\{u_n\in \mathbb{C}, \quad
	\|u\|_{l^2_w}^2 = h\sum_\nZ{w_n|u_n|^2} < \infty \Big\},
\end{align*}
where the weight function $w_n$ satisfies the following conditions

\begin{eqnarray}\label{w_cond}
	w_n &\ge& 1, \nonumber \\
	 w_n &\le& c_1 w_{n+1}, \quad {c}_1>0.
\end{eqnarray}
The space $l^2_w$ is a Hilbert space with inner product
$$
(u,v)_{l^2_w}=h\mathrm{Re}\sum_{\nZ}w_nu_n\overline{v}_n,\quad u,\,v\in l^2_w.
$$
\begin{rem}\label{rem2}
	Standard simple examples of weights that satisfy the conditions \eqref{w_cond} with $c_1=1$, are the following: (a) For $\lambda, k\in\mathbb{N}$,   $w_n=w_{n, alg}= (1+|hn|^\lambda)^k$, for  $\lambda>0$, $k\geq 1$, which implies at least an algebraic localization of the sequences of $l^2_w$. (b) $w_n=w_{n,exp}= e^{\lambda|n|^k}$, for $\lambda>0$, $k\geq 1$, which implies exponential localization of the sequences of $l^2_w$. 
\end{rem}
The discrete Laplacian operator \eqref{dL} is not symmetric in the weighted inner product of $l^2_w$. However, it is still a bounded operator on $l^2_w$  as it is proved in the following proposition.
\begin{prop} 
	\label{DLW}
	We assume that the weight $w_n$ satisfies the  conditions \eqref{w_cond}. Then, 
	the discrete Laplacian \eqref{dL} defines a bounded operator on  $l^2_w$, i.e., 
	$$
	\Delta_d: l^2_w\to l^2_w,
	$$ 
	and
	$$
	\|\Delta_d(u)\|_{l^2_w} \le 2\sqrt{1+c_1+2c_1^{1/2}}\|u\|_{l^2_w}. 
	$$ 
\end{prop}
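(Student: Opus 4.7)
The plan is to write the discrete Laplacian as a sum of shift operators and the identity, apply the triangle inequality in $l^2_w$, and then reduce everything to an index-shift estimate that is controlled by the weight condition \eqref{w_cond}.

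First I would introduce the shift operators $(Su)_n := u_{n+1}$ and $(S^{-1}u)_n := u_{n-1}$ so that, for $u\in l^2_w$,
\begin{equation*}
\Delta_d u = Su - 2u + S^{-1}u.
\end{equation*}
Since $\|\cdot\|_{l^2_w}$ is a genuine norm on $l^2_w$, the triangle inequality gives
\begin{equation*}
\|\Delta_d u\|_{l^2_w} \;\le\; \|S u\|_{l^2_w} + 2\|u\|_{l^2_w} + \|S^{-1}u\|_{l^2_w}.
\end{equation*}
The task therefore reduces to bounding each of the two shifts by $\sqrt{c_1}\,\|u\|_{l^2_w}$.

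For the first shift, I would change the summation index $m=n+1$ and use the weight condition in the form $w_{m-1}\le c_1 w_m$ (which is just \eqref{w_cond} applied at the index $m-1$):
\begin{equation*}
\|Su\|_{l^2_w}^{2} = h\sum_{\nZ} w_n|u_{n+1}|^{2} = h\sum_{m\in\mathbb{Z}} w_{m-1}|u_m|^{2} \le c_1\,h\sum_{m\in\mathbb{Z}} w_m|u_m|^{2}= c_1\,\|u\|_{l^2_w}^{2}.
\end{equation*}
For the second shift, I would change the index $m=n-1$ and invoke the weight condition in the reverse direction to obtain $w_{m+1}\le c_1 w_m$, yielding $\|S^{-1}u\|_{l^2_w}^{2}\le c_1\|u\|_{l^2_w}^{2}$ by the same rearrangement. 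Plugging both estimates into the triangle inequality produces
\begin{equation*}
\|\Delta_d u\|_{l^2_w} \le 2\sqrt{c_1}\,\|u\|_{l^2_w}+2\|u\|_{l^2_w}=2\bigl(1+\sqrt{c_1}\bigr)\|u\|_{l^2_w},
\end{equation*}
which, since $(1+\sqrt{c_1})^{2}=1+c_1+2\sqrt{c_1}$, is exactly the stated bound $2\sqrt{1+c_1+2c_1^{1/2}}\,\|u\|_{l^2_w}$.

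The only delicate step is the second shift estimate: the hypothesis \eqref{w_cond} as written is a one-sided control of how fast $w_n$ can \emph{decrease} when $n$ increases, and I would need it to also control how fast $w_n$ can decrease when $n$ decreases in order to handle $\|S^{-1}u\|_{l^2_w}$ with the same constant $c_1$. In practice this is harmless since the examples in Remark~\ref{rem2} (both $w_{n,alg}$ and $w_{n,exp}$) obey the estimate symmetrically, so the same $c_1$ can be used for both shifts; were a careful distinction needed, the final constant would simply be $2(1+\sqrt{c_1})$ with $c_1$ the maximum of the two one-sided ratios. Apart from this bookkeeping point, the proof is essentially a two-line application of the triangle inequality plus an index shift.
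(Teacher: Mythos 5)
Your proof is correct and lands on exactly the stated constant, but it takes a genuinely different route from the paper. You decompose $\Delta_d=S-2I+S^{-1}$, apply the triangle inequality, and reduce everything to the single operator bound $\|S^{\pm1}u\|_{l^2_w}\le c_1^{1/2}\|u\|_{l^2_w}$, obtaining $2(1+c_1^{1/2})=2\sqrt{1+c_1+2c_1^{1/2}}$ in two lines. The paper instead expands $|u_{n+1}-2u_n+u_{n-1}|^2$ into six terms, estimates the cross terms by Cauchy--Schwarz, and applies the weight condition term by term to reach $4(1+c_1+2c_1^{1/2})\|u\|^2_{l^2_w}$; the two computations are equivalent since $(1+c_1^{1/2})^2=1+c_1+2c_1^{1/2}$, but yours makes the structure of the constant transparent and avoids the bookkeeping. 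The ``delicate point'' you flag about the backward shift is real and is \emph{not} resolved in the paper's proof either: the paper re-indexes $h\sum_{\nZ}w_n|u_{n-1}|^2$ as $h\sum_{\nZ}w_{n-1}|u_n|^2$, whereas the substitution $m=n-1$ actually yields $h\sum_{\nZ}w_{n+1}|u_n|^2$, which requires the reverse inequality $w_{n+1}\le c_1 w_n$ rather than the stated $w_n\le c_1 w_{n+1}$. So both arguments implicitly read \eqref{w_cond} as a two-sided ratio bound $\max(w_n/w_{n+1},\,w_{n+1}/w_n)\le c_1$, which is what the symmetric examples of Remark~\ref{rem2} satisfy (though not with $c_1=1$ as claimed there, e.g.\ $w_{-1}>w_0$ for the algebraic weight). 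Your explicit acknowledgment of this point is a merit of your write-up, not a gap relative to the paper.
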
	
\begin{proof} For any $u\in \ell^2_w$, by using  the Cauchy--Schwarz inequality and assumptions \eqref{w_cond} on the weight $w_n$, we have that
	\begin{align*}
		\|\Delta_d u\|^2_{l^2_w} = & h \sum_{\nZ}w_n|u_{n+1} -2 u_n  + u_{n-1}|^2 \\ 
		\le & h\sum_{\nZ}w_n\big(|u_{n+1}|^2 + 4|u_n|^2 + |u_{n-1}|^2 +   4|u_{n+1}| |u_n| + 4|u_{n-1}| |u_n|  + 2|u_{n+1}| |u_{n-1}|\big) \\ = &
		h\sum_{\nZ}w_n|u_{n+1}|^2 +4h\sum_{\nZ}w_n|u_n|^2 + 	h\sum_{\nZ}w_n|u_{n-1}|^2 + \\ &
		4h\sum_{\nZ}w_n|u_{n+1}| |u_n| + 4h\sum_{\nZ}w_n|u_{n-1}| |u_n|  + 2h\sum_{\nZ}w_n|u_{n+1}| |u_{n-1}|  \\ \le & 
		c_1h\sum_{\nZ}w_{n+1}|u_{n+1}|^2  + 4\|u\|^2_{l^2_w} + h\sum_{\nZ}w_{n-1}|u_n|^2  + 	4 \Big(h\sum_{\nZ}w_n|u_{n+1}|^2\Big)^{1/2} \Big(h\sum_{\nZ}w_n|u_n|^2\Big)^{1/2}  \\ &
		+ 4 \Big(h\sum_{\nZ}w_n|u_{n-1}|^2\Big)^{1/2} \Big(h\sum_{\nZ}w_n|u_n|^2\Big)^{1/2}   + 2 \Big(h\sum_{\nZ}w_n|u_{n+1}|^2\Big)^{1/2} \Big(h\sum_{\nZ}w_n|u_{n-1}|^2\Big)^{1/2}   \\ & \le
		c_1\|u\|^2_{l^2_w} + 4\|u\|^2_{l^2_w} + c_1h\sum_{\nZ}w_n|u_n|^2  +  4c_1^{1/2}\|u\|^2_{l^2_w} + 4c_1^{1/2}\|u\|^2_{l^2_w} + 2c_1\|u\|^2_{l^2_w} \\ & =
		4(1+c_1 + 2c_1^{1/2})\|u\|^2_{l^2_w}.
	\end{align*}	
\end{proof}
We will also use the continuous embedding
\begin{equation}
\label{wemb}
\|u\|_{l^2}\leq \|u\|_{l^2_w},\;\;\mbox{for all $u$ in $l^2_w$},
\end{equation}
which obviously holds due to the first of the properties \eqref{w_cond} on the weight $w_n$.
\subsection{Global existence of solutions for the Salerno lattice on $l^2_w$ and remarks on the solutions of the AL lattice on $l^2_w$}
We set $\gamma,\mu>0$ in the Salerno lattice \eqref{eq:Salerno}. It is an infinite dimensional, non-integrable Hamiltonian system with Hamiltonian
\begin{eqnarray}
	\label{Sham}
	H_{\small{S}}= \sum_\nZ \bar{\psi}_n(\psi_{n+1} -  \psi_{n-1})-\frac{\gamma h^2}{2}\sum_\nZ|\psi_n|^2-\frac{2}{\mu h^2}\sum_\nZ\ln(1+\mu|\psi_n|^2).
\end{eqnarray}
When supplemented with the vanishing boundary conditions  the corresponding initial-boundary value problem is globally well-posed in $l^2$. 
\begin{prop}
	\label{SGB}
Let $\mu,\gamma\geq 0$.	For every initial condition $\psi_n(0):=\psi(0)\in l^2$,
	the problem (\ref{Vbc})-(\ref{eq:Salerno}) possesses a unique global solution $\psi_n(t):=\psi(t)$ belonging to $C^1([0,\infty),l^2)$.  Moreover, the solution is uniformly bounded in $l^2$ for all $t> 0$, and satisfies the estimate 
	\begin{equation}
		\label{UBS}
			\|\psi(t)\|_{l^2}^2\leq M_{\mu,\gamma}\|\psi(0)\|_{l^2}^2,\;\;\mbox{for all $t>0$},
	\end{equation}
for some constant $M_{\mu,\gamma}>0$, which is independent of $t$.
\end{prop}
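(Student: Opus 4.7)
The plan is to combine a standard local well-posedness argument in $l^2$ with an a priori bound derived from a logarithmic conserved ``norm'' of the Salerno lattice, then close by the blow-up alternative. To set up local existence I would recast (\ref{eq:Salerno}) as the abstract ODE $\dot\psi=\mathcal{F}(\psi)$ on $l^2$, with
\begin{equation*}
\mathcal{F}(\psi)_n = \tfrac{i}{h^2}(\Delta_d\psi)_n + i\mu|\psi_n|^2(\psi_{n+1}+\psi_{n-1}) + i\gamma|\psi_n|^2\psi_n.
\end{equation*}
The linear part is bounded on $l^2$ by the estimate $\|\Delta_d u\|_{l^2}\leq c\|u\|_{l^2}$ recalled at the start of the section, and the embedding $\|u\|_{l^\infty}\leq h^{-1/2}\|u\|_{l^2}$ together with a polarization identity for $|a|^2a-|b|^2b$ show that the two cubic terms are locally Lipschitz as maps $l^2\to l^2$. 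Picard--Lindel\"of then yields a unique maximal solution $\psi\in C^1([0,T_{\max}),l^2)$.

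For the a priori bound (case $\mu>0$), I would exploit the logarithmic functional
\begin{equation*}
\mathcal{N}(t):=\sum_{n\in\mathbb{Z}}\ln\bigl(1+\mu h^2|\psi_n(t)|^2\bigr),
\end{equation*}
and compute $d\mathcal{N}/dt$ term by term. Two cancellations conspire: the on-site cubic $\gamma|\psi_n|^2\psi_n$ contributes only to the real part of $\bar\psi_n\dot\psi_n$ and so drops out, while the combined coefficient $(h^{-2}+\mu|\psi_n|^2) = h^{-2}(1+\mu h^2|\psi_n|^2)$ produced by the off-diagonal coupling exactly cancels the denominator coming from differentiating $\ln(1+\mu h^2|\psi_n|^2)$. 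What remains is
\begin{equation*}
\frac{d\mathcal{N}}{dt} = -2\mu\sum_{n\in\mathbb{Z}}\bigl[\mathrm{Im}(\bar\psi_n\psi_{n+1})+\mathrm{Im}(\bar\psi_n\psi_{n-1})\bigr],
\end{equation*}
which vanishes via the index shift $n\mapsto n-1$ together with $\mathrm{Im}(\bar\psi_{n-1}\psi_n) = -\mathrm{Im}(\bar\psi_n\psi_{n-1})$.

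The conservation $\mathcal{N}(t)=\mathcal{N}(0)$ then powers the $l^2$-bound by elementary real-variable inequalities. From $\ln(1+x)\leq x$ one gets $\mathcal{N}(0)\leq \mu h\|\psi(0)\|_{l^2}^2=:C_0$, and since every summand in $\mathcal{N}$ is nonnegative, each single term is bounded by $C_0$, yielding the pointwise bound $|\psi_n(t)|^2\leq R_0 := (e^{C_0}-1)/(\mu h^2)$. Feeding this back into $\mathcal{N}(t)$ via the linear lower bound $\ln(1+\mu h^2 y)\geq (\ln(1+\mu h^2 R_0)/R_0)\,y$ valid on $[0,R_0]$ and summing produces
\begin{equation*}
\|\psi(t)\|_{l^2}^2 \leq \frac{\mu h^2 R_0}{\ln(1+\mu h^2 R_0)}\,\|\psi(0)\|_{l^2}^2,
\end{equation*}
which is precisely (\ref{UBS}) with an $M_{\mu,\gamma}$ manifestly independent of $t$ (and, incidentally, of $\gamma$). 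The degenerate case $\mu=0$ is immediate from the usual DNLS $l^2$-norm conservation, with $M_{0,\gamma}=1$, and the standard blow-up alternative for Banach-space ODEs forces $T_{\max}=+\infty$ in either case.

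The only non-routine ingredient is the identification of $\mathcal{N}$: the naive $\|\psi\|_{l^2}^2$ is \emph{not} conserved once $\mu>0$, owing to the non-Kerr coupling $|\psi_n|^2(\psi_{n+1}+\psi_{n-1})$, and the logarithm is forced by the non-canonical Poisson structure of the Salerno model. Verifying the precise cancellation in $d\mathcal{N}/dt$ is the heart of the argument; the remaining steps are routine ODE theory in $l^2$ combined with standard real-variable bounds for the logarithm.
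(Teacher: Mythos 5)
Your argument is correct and follows essentially the same route as the paper's proof, which is delegated to \cite{DJNa} (Lemma 2.1 there): local well-posedness by Picard iteration in $l^2$, conservation of the deformed power $P_\mu$ — your functional $\mathcal{N}$, recorded by the paper in \eqref{norm}--\eqref{N}, with the same observation that the on-site cubic term drops out so the AL invariant survives for the full Salerno lattice — followed by elementary logarithm inequalities and the blow-up alternative. The only cosmetic difference is how you pass from $\mathcal{N}(t)=\mathcal{N}(0)$ to \eqref{UBS} (a pointwise sup bound plus a chord estimate for the concave logarithm, versus the product inequality $\prod_n(1+x_n)\geq 1+\sum_n x_n$ used in \cite{DJNa}); in both versions $M_{\mu,\gamma}$ depends on $\|\psi(0)\|_{l^2}$ but not on $t$, consistent with the statement.
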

The proof is given in \cite{DJNa}. Proposition \ref{SGB} provides simultaneously the global-in-time, well-posedness of the corresponding problems for the focusing non-integrable DNLS and the  focusing integrable AL lattice. We will use Proposition \ref{SGB} to prove the following theorem.
\begin{theorem}\label{THS} We assume that the weight function $w_n$ satisfies the conditions \eqref{w_cond}. 
Let $\psi_n(0)\in l^2_w$. Then, the corresponding solution of the initial-boundary value problem  for the Salerno lattice \eqref{Vbc}-\eqref{eq:Salerno} $\psi(t)\in C^1([0,\infty),l^2_w)$.
\end{theorem}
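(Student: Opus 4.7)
The plan is (i) to establish local well-posedness in $l^2_w$ by a contraction argument, and (ii) to rule out finite-time $l^2_w$-blow-up by a Gronwall a priori estimate. The key mechanism is that the uniform $l^2$-bound of Proposition \ref{SGB}, which is applicable here since $l^2_w\hookrightarrow l^2$ by \eqref{wemb}, can be used to dominate the AL-type nonlinearity, which does not enjoy any structural cancellation in the weighted inner product.

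For local existence, I would first verify that the right-hand side of \eqref{eq:Salerno} defines a locally Lipschitz vector field on $l^2_w$. The linear part is bounded on $l^2_w$ by Proposition \ref{DLW}. For the cubic nonlinearities, combining the embeddings $\|u\|_{l^\infty}\le\|u\|_{l^2}\le\|u\|_{l^2_w}$ (from \eqref{ineq_l} and \eqref{wemb}) with the weight-shift bound in \eqref{w_cond} yields estimates of the form $\||\psi|^2\psi\|_{l^2_w}\le\|\psi\|_{l^2_w}^3$, together with analogous bounds for the shifted terms $|\psi|^2\psi_{\cdot\pm 1}$ involving a factor $\sqrt{c_1}$. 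A standard Picard iteration in $C([0,T_*];l^2_w)$ then produces a unique local solution $\psi\in C^1([0,T_*);l^2_w)$.

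For the global extension I would compute $\frac{d}{dt}\|\psi(t)\|_{l^2_w}^2=2\mathrm{Re}(\dot\psi,\psi)_{l^2_w}$ and substitute $\dot\psi$ from \eqref{eq:Salerno}. The on-site contribution vanishes since $\mathrm{Re}(i\gamma|\psi_n|^4)=0$, leaving
\begin{equation*}
\frac{d}{dt}\|\psi(t)\|_{l^2_w}^2 = -\frac{2}{h^2}\,\mathrm{Im}\,h\!\sum_{\nZ} w_n(\Delta_d\psi)_n\overline{\psi}_n - 2\mu\,\mathrm{Im}\,h\!\sum_{\nZ} w_n|\psi_n|^2(\psi_{n+1}+\psi_{n-1})\overline{\psi}_n.
\end{equation*}
The Laplacian term is controlled by $(C/h^2)\|\psi\|_{l^2_w}^2$ via Cauchy--Schwarz in $l^2_w$ and Proposition \ref{DLW}. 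For the AL-type term, I would pull out the pointwise factor $\|\psi(t)\|_{l^\infty}^2\le\|\psi(t)\|_{l^2}^2\le M_{\mu,\gamma}\|\psi(0)\|_{l^2}^2$ (applying \eqref{ineq_l} together with Proposition \ref{SGB}, which is permitted because $\psi(0)\in l^2_w\subset l^2$), and bound the remaining $h\sum_{\nZ} w_n|\psi_n||\psi_{n\pm 1}|$ by Cauchy--Schwarz in $l^2_w$ and the weight-shift. This yields a differential inequality $\frac{d}{dt}\|\psi\|_{l^2_w}^2\le K\|\psi\|_{l^2_w}^2$ with $K=K(h,\mu,\gamma,c_1,\|\psi(0)\|_{l^2})$ independent of $t$. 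Gronwall then gives $\|\psi(t)\|_{l^2_w}^2\le e^{Kt}\|\psi(0)\|_{l^2_w}^2$, ruling out finite-time blow-up and extending the local solution to all of $[0,\infty)$; the $C^1$-regularity in $t$ is then immediate from \eqref{eq:Salerno}.

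The genuine obstacle is the AL-type nonlinearity $\mu|\psi_n|^2(\psi_{n+1}+\psi_{n-1})$: unlike in the unweighted $l^2$ setting, it admits no structural cancellation in the $l^2_w$ inner product, since the weights destroy the translation invariance that underlies the $l^2$-conservation of mass for pure AL. The rescue is precisely Proposition \ref{SGB}: its time-uniform $l^2$-control, pushed pointwise through \eqref{ineq_l}, converts the obstruction into a term linear in $\|\psi\|_{l^2_w}^2$ that Gronwall absorbs.
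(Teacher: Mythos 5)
Your proposal is correct and follows essentially the same route as the paper: the core of both arguments is the weighted energy identity obtained by pairing \eqref{eq:Salerno} with $hw_n\overline{\psi}_n$, the vanishing of the on-site cubic contribution, control of the Laplacian term via Proposition \ref{DLW}, control of the AL-type term via the time-uniform $l^2$-bound of Proposition \ref{SGB} pushed through $\|\psi\|_{l^\infty}\le\|\psi\|_{l^2}$, and Gronwall to exclude finite-time blow-up in $l^2_w$. The only difference is that you spell out the local Lipschitz/Picard step that the paper compresses into ``standard continuation arguments,'' which is a harmless (indeed welcome) elaboration rather than a different method.
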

\begin{proof} The assumption that the initial condition $\psi(0)\in l^2_w$ implies due to the embedding \eqref{wemb}, that $\psi(0)\in l^2$. Therefore, Proposition \ref{SGB} is applicable and $\psi(t)\in C^1([0,\infty),l^2)$. 
Multiplying \eqref{eq:Salerno} with $hw_n\overline{\psi}_n$, taking the sum over $n$ and keeping the imaginary parts, yields
\begin{equation}\label{eq_lem2}
	\frac{1}{2}\frac{d}{dt}h\sum_\nZ{w_n|\psi_n|^2} +  \frac{1}{h^2} \mathrm{Im}\Big\{h\sum_\nZ{w_n\overline{\psi}_n\Delta_d \psi_n}\Big\} +   \mu\,\mathrm{Im}\Big\{h\sum_\nZ{w_n\overline{\psi}_n|\psi_n|^2}(\psi_{n+1} + \psi_{n-1})\Big\} = 0.
\end{equation}
We observe that with the above multiplication, the nonlinear term stemming from the cubic nonlinearity $\gamma|\psi|^2\psi_n$ of the Salerno lattice \eqref{eq:Salerno} is eliminated. Next, by using Proposition \ref{DLW} and the Cauchy-Schwarz inequality, the second term of \eqref{eq_lem2}, can be estimated as follows:
\begin{align*}
	\Big|\im\Big\{h\sum_\nZ{w_n\overline{\psi}_n\Delta_d \psi_n}\Big\}\Big| & \le 
	h\sum_\nZ{w_n^{1/2}|\overline{\psi}_n| w_n^{1/2}|\Delta_d \psi_n|} \\ & \le
	\Big(h\sum_\nZ w_n|\psi_n|^2 \Big)^{1/2}	\Big(h\sum_\nZ{w_n|\Delta_d\psi_n|^2 } \Big)^{1/2} \\ & =
	\|\psi\|_{l^2_w}	\|\Delta_d\psi\|_{l^2_w} \\ & \le
	2\sqrt{1 + c_1 + 2c_1^{1/2}}\|\psi\|^2_{l^2_w}.
\end{align*}

Finally, for the last  term of \eqref{eq_lem2}, we have the estimate 
\begin{align*}
	\Big|\mathrm{Im} \Big\{h \sum_\nZ  |\psi_n |^2 (\psi_{n+1} + \psi_{n-1})  w_n\overline{\psi}_n \Big\} \Big|&  
	\le 2 \sup_{n}| \psi_n | h\sum_\nZ w_n | \psi_n |^2 |\psi_n | \\ & \le 
	2  \| \psi \|_{l^ \infty}^2 h\sum_\nZ w_n  |\psi_n |^2 \\ & \le
	2  \| \psi \|_{l^ \infty}^2 \| \psi \|^2_{l^2_w} \\ & \le
	2\| \psi \|_{l^2}^2 \| \psi \|_{l^2_w}^2 \\ & \le
	2M_{\mu,\gamma}\| \psi(0) \|_{l^2}^2 \| \psi \|_{l^2_w}^2,
\end{align*}
where we used \eqref{ineq_l} and the uniform bound \eqref{UBS} for the solutions of the problem in $l^2$. 
Inserting the above estimates in \eqref{eq_lem2}, we get that 
$$
\frac{1}{2}\frac{d}{dt}  \| \psi \|_{l^2_w}^2  \le 
\Gamma \| \psi \|_{l^2_w}^2, \qquad \Gamma :=   \frac{2\sqrt{1 + c_1 + 2c_1^{1/2}}}{h^2} + 2\mu M_{\mu,\gamma}\| \psi(0) \|_{l^2}^2, 
$$
which implies the estimate
\begin{align}
	\label{expest}
	\| \psi(t) \|_{l^2_w}\le 
	e^{\Gamma t}\| \psi(0) \|_{l^2_w},
\end{align}
which excludes the possibility of any time-singularity in $l^2_w$ and guarantees global existence in $l^2_w$ by the standard continuation arguments for ordinary differential equations \cite{zei85a}. 
\end{proof}
An intuitive example illustrating the relevance of Theorem \ref{THS} and the estimate \eqref{expest} is provided by the $l^2_w = l^{2,2}$-norm of the soliton solution \eqref{ALBS}. This norm [continuous (blue) curves], as depicted in Figure \ref{FigEX}, grows exponentially over time, without leading to a finite-time blow-up. The growth is driven by the exponential term in $t$, which results in exponential growth when $v > 0$. On the other hand, the $l^2$-norm [dashed (blue) horizontal lines] is conserved. Note that for the soliton solution \eqref{ALBS} the $l^2$-norm is not only bounded but also conserved,  while in general, the $l^2$-norm  is not conserved by the solutions of the AL, see  Section \ref{s3}.
\begin{figure}[tbp!]
	\centering 
	\begin{tabular}{cc}
		(a)&\hspace{0.7cm} (b)\\
		\includegraphics[scale=0.47]{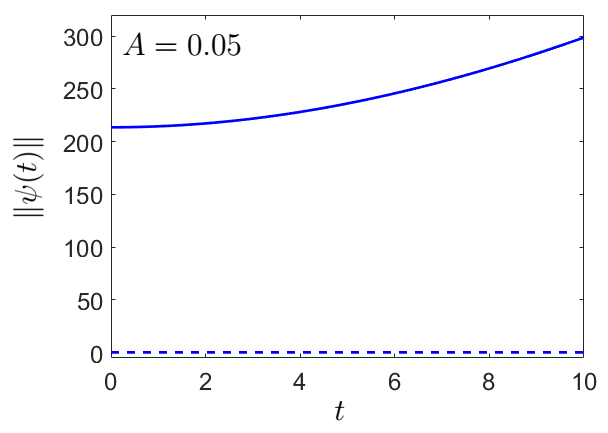}&
		\hspace{0.3cm}
		\includegraphics[scale=0.47]{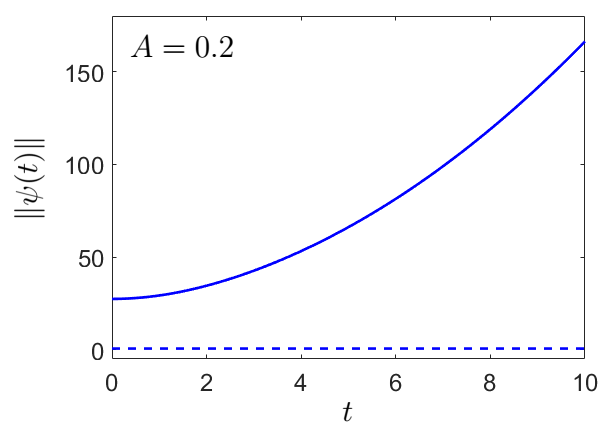}
	\end{tabular}
	\caption{
		\label{FigEX} An intuitive example relevant to the estimate \eqref{expest}. The $l^2_w$-norm (continuous blue curve) of the analytical soliton solution \eqref{ALBS} with $v>0$ and its corresponding $l^2$-norm (dashed blue curve). (a) $A=0.05$, $\|\psi\|_{l^2}=0.6345$. (b) $A=0.2$, $\|\psi\|_{l^2}=0.3163$.}
\end{figure}
%
\begin{rem}
	\label{DPDE}
	\begin{enumerate}
		\item
{\em Continuous limit and the estimate \eqref{expest}.} We point out that the estimate \eqref{expest} should be considered only for {\em arbitrary, but finite values} of $h$. It is not possible to extend this understanding to the behavior of solutions in the weighted $L^2$ function spaces for the relevant pde (of NLS type, combining the standard power and derivative nonlinearity), which can be formally obtained when $h \to 0$. This limitation arises because the derivation of \eqref{expest} relies on the embedding \eqref{ineq_l}, which is valid only in the discrete setup. In the continuous case, deriving a similar estimate would require well-posedness and higher-order regularity results for the solutions, allowing the application of suitable Sobolev space embeddings on the infinite line. Although the study of such a problem is important, it is beyond the scope of the present work.
	\item 
	{\em Solutions preserve the spatial localization of the initial condition.} Theorem \ref{THS} implies that the subspace $\mathcal{Y} = l^2_w$ of the space $\mathcal{X} = l^2$ is invariant under the flow of the Salerno lattice, $\mathcal{S}(t): \mathcal{X} \rightarrow \mathcal{X}$. In other words, $\mathcal{S}(t): \mathcal{Y} \subset \mathcal{X} \rightarrow \mathcal{Y} \subset \mathcal{X}$, meaning that for any $\psi(0) \in \mathcal{Y}$, it holds that $\psi(t) = \mathcal{S}(t)\psi(0) \in \mathcal{Y}$ for all $t > 0$. This invariance has physical significance: since $w_n$ describes a prescribed rate of localization, {\em the invariance implies that the solution preserves the localization of the initial condition}. In other words, initial conditions with algebraic or exponential spatial localization (see Remark \ref{rem2}) result in solutions with the same rate of spatial localization. This is important from both a mathematical and physical perspective, as localized solutions of physical significance are distinguished by the rate of their spatial localization.
	\item 
	{\em The AL lattice in $l^2_w$.} Proposition \ref{SGB} generalizes \cite[Proposition 1]{Yamane1} for the AL lattice to the case of the Salerno lattice. The proof of \cite{Yamane1} makes use of the conservation law $c_{-\infty}=\Pi_{n\in\mathbb{Z}}(1+|\psi_n|^2)$ of the AL lattice, while our version for the Salerno lattice interpolates the global existence result of Proposition \ref{SGB} in $l^2$.  The problem of considering the AL lattice in weighted spaces is significant. The key works of \cite{Yamane1, China1} employ advanced methods from integrability theory to study the initial-boundary value problem in these weighted spaces, particularly for the case $\lambda=2$, $k\geq 1$, denoted by $l^2_w:=l^{2,k}$, {see Remark \ref{rem2}}. The choice of $k$ defines regularity properties of the reflection coefficient of the associated Riemann--Hilbert problem (RH). The study of the RH problem provides a detailed description of the global existence of solutions in $l^{2,k}$ and their long-time asymptotics in the $(n,t)$ plane. In summary, for initial data $\psi(0)\in l^2_w$, the solution is characterized as a sum of 1-bright soliton solutions of the form \eqref{ALBS} (with different phase shift formulas in different regions), while along specific rays away from the solitons, the solution exhibits decaying oscillations at particular rates. To the best of our knowledge, it is not known if solutions of the AL lattice with initial data in $l^2_w$ are uniformly bounded with respect to time, in $l^2_w$.
	\end{enumerate}
\end{rem}
\section{The distance between the solutions of the AL and the DGPE lattice}
\label{SecIII}\label{s3}
In this section, we present the main result concerning the estimates for the distance between the solutions of the DGPE and the AL lattice. In the first part, we provide the estimates for general initial data. We comment also for the case of the finite lattice with Dirichlet boundary conditions.  In the second part, we discuss the application of these estimates for small initial data.
\subsection{The distance between the solutions: general initial data}
\label{SubsecIIa}
\paragraph{The case of the infinite lattice.} Recall  (see \cite{JMP2011}), that the non-integrable DGPE latttice \eqref{GPE} conserves the following quantities, the Hamiltonian 
\begin{equation}\label{H_DGPE} 
	H_{NI} =  \sum_{\nZ}|\phi_{n+1} -\phi_n|^2 + \Omega^2 h^2\sum_{\nZ}(hn)^2|\phi_n|^2 + \frac{g h^2}{2}\sum_{\nZ}|\phi_n|^4,
\end{equation}
and the power (or $l^2$-norm)
\begin{equation}
	\label{powerDGPE}
 P=\|\phi\|_{l^2}^2 = h\sum_{\nZ} |\phi_n|^2.	
\end{equation}

The integrable AL lattice \eqref{AL} has infinitely many invariant quantities \cite{PM1}, including the Hamiltonian
\begin{equation}\label{H2}
	H_{I} = \sum_\nZ \overline{\psi}_n(\psi_{n+1} -  \psi_{n-1}) - \frac{2}{\mu h^2} \sum_{\nZ} \ln(1+\mu h^2|\psi_n|^2) 
\end{equation}
and the deformed power 
\begin{equation}\label{norm}
	P_\mu= \frac{1}{\mu h^2}\sum_\nZ{\ln(1+\mu h^2|\psi_n|^2)}.
\end{equation} 
Therefore, for all $t>0$, we have that
\begin{eqnarray}
	\label{NGPE}
	H_{NI}(t) &=&H_{NI}(0), \quad H_{I}(t)  = H_{I}(0),\\
	\label{N}
	\|\phi(t)\|^2_{l^2}  &=& 	\|\phi(0)\|^2_{l^2}, \quad P_\mu(t) = P_\mu(0),
\end{eqnarray}
for the solutions of the DGPE and and the AL systems, respectively. Note that \cite{PM1} provides formulas for the derivation of the conserved quantities of general Hamiltonian DNLS-type systems.  For the AL lattice, Proposition \ref{SGB} implies a bound of the form \eqref{UBS}, when $\psi(0)\in l^2$, namely,
satisfies the estimate 
\begin{equation}
	\label{UBSAL}
	\|\psi(t)\|_{l^2}^2\leq M_{\mu,0}\|\psi(0)\|_{l^2}^2, \;\;\mbox{for all $t>0$}.
\end{equation}
For the conservation of the Hamiltonian \eqref{H_DGPE}, it is crucial to remark the following:  the global in time solvability in $l^2$, that is, that  $\phi(t)\in C^1([0,\infty),l^2)$ is guaranteed for all initial data $\phi(0)\in l^2$ due to the conservation of the power \eqref{powerDGPE}.  However, the fact that $\phi(t)\in C^1([0,\infty),l^2)$ do not guarantees that the Hamiltonian $H_{NI}(t)$ is  bounded,  unless 
\begin{eqnarray}
\label{ht0}
H_{NI}(0)<\infty.
\end{eqnarray}
For \eqref{ht0} to be satisfied, it is required that the initial data have {\em finite discrete variance}, that is 
 \begin{eqnarray}
 \label{ht01}
V_d(0):=h\sum_{\nZ} (hn)^2|\phi_n(0)|^2< \infty.
 \end{eqnarray}
Initial data satisfying \eqref{ht01} are in $l^2$, since 
\begin{eqnarray}
\label{ht001}
h^3\sum_{\nZ}|\phi_n(0)|^2=h^3|\phi_0(0)|+h^3\sum_{n\neq 0, n\in\mathbb{Z}}|\phi_n(0)|^2\leq h^3|\phi_0(0)|+V_d(0).
\end{eqnarray}
\begin{prop}
	\label{finHNI}
For every $\phi(0)$ satisfying \eqref{ht01}, let $\phi(t)\in C^1([0,\infty),l^2)$ be the global in time solution of the problem \eqref{GPE}-\eqref{Vbc} for the DGPE lattice. Then
\begin{equation}
\label{htrapterm}
V_d(t)=h\sum_{\nZ}(hn)^2|\phi_n(t)|^2<\infty,\;\;\mbox{for all $t>0$.}
\end{equation}
\end{prop}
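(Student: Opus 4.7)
The plan is to leverage the conservation of the DGPE Hamiltonian \eqref{H_DGPE}. First, I would verify that $\phi(0)\in l^2$ (which follows from \eqref{ht01} via \eqref{ht001}) together with $V_d(0)<\infty$ guarantees $H_{NI}(\phi(0))<\infty$: the kinetic piece is bounded by $\sum_{\nZ}|\phi_{n+1}(0)-\phi_n(0)|^2\le 4h^{-1}\|\phi(0)\|_{l^2}^2$, the focusing quartic satisfies
\begin{equation*}
\frac{|g|h^2}{2}\sum_{\nZ}|\phi_n(0)|^4 \le \frac{|g|h^2}{2}\|\phi(0)\|_{l^\infty}^2\cdot h^{-1}\|\phi(0)\|_{l^2}^2 \le \frac{|g|}{2}\|\phi(0)\|_{l^2}^4,
\end{equation*}
by the embedding $\|\phi\|_{l^\infty}^2\le h^{-1}\|\phi\|_{l^2}^2$, and the trap piece is precisely $\Omega^2 h V_d(0)<\infty$.

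Next I would establish the conservation identity $H_{NI}(\phi(t))=H_{NI}(\phi(0))$ along the $l^2$-flow. Formally, multiplying \eqref{GPE} by $\overline{\dot\phi}_n$, summing in $n$, and taking the real part yields $\frac{d}{dt}H_{NI}=0$, reflecting the Hamiltonian structure of the equation. To justify this rigorously for data merely in $l^2$ with finite variance, I would regularize by the compactly supported truncation $\phi^{(N)}(0)=\phi(0)\,\chi_{\{|n|\le N\}}$, apply conservation to the corresponding approximate solutions $\phi^{(N)}(t)$ (for which all Hamiltonian sums converge absolutely and termwise differentiation is legitimate because of fast spatial decay), and pass to the limit $N\to\infty$ using continuous dependence of the DGPE flow in $l^2$ provided by Proposition \ref{SGB}, together with Fatou's lemma applied to $V_d$ along the $l^2$-convergent sequence $\phi^{(N)}(t)\to\phi(t)$.

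Once $H_{NI}(\phi(t))=H_{NI}(\phi(0))$, the desired bound on $V_d(t)$ follows by rearrangement:
\begin{equation*}
\Omega^2 h V_d(t) = H_{NI}(\phi(0)) - \sum_{\nZ}|\phi_{n+1}(t)-\phi_n(t)|^2 + \frac{|g|h^2}{2}\sum_{\nZ}|\phi_n(t)|^4.
\end{equation*}
Dropping the non-negative kinetic contribution and bounding the focusing quartic exactly as in the first step (using only the conserved $\|\phi(t)\|_{l^2}=\|\phi(0)\|_{l^2}$ from \eqref{N}) yields the uniform-in-time estimate
\begin{equation*}
V_d(t) \le \frac{1}{\Omega^2 h}\left(H_{NI}(\phi(0)) + \frac{|g|}{2}\|\phi(0)\|_{l^2}^4\right) < \infty,
\end{equation*}
for all $t>0$, which establishes \eqref{htrapterm}.

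The hardest point is the rigorous justification of the conservation of $H_{NI}$. A naive hard-truncation $\sum_{|n|\le M}$ in the conservation computation produces, after summation by parts against the weight $(hn)^2$, boundary contributions of order $(hM)^2|\phi_M(t)||\phi_{M\pm 1}(t)|$ which need not vanish as $M\to\infty$ from mere $l^2$-decay of $\phi(t)$. The initial-data-approximation route above sidesteps this by reducing to smooth approximants for which conservation is unambiguous and then transferring the estimate to $\phi(t)$ by lower semicontinuity. As a backup, if one wished to avoid Hamiltonian conservation altogether, a direct computation shows $\frac{d}{dt}|\phi_n|^2=-2h^{-2}\mathrm{Im}(\overline{\phi}_n\Delta_d\phi_n)$ because the trap and cubic terms are real multiples of $\phi_n$; summation by parts against $(hn)^2$ collapses the derivative of $V_d$ to $2h\sum_{\nZ}(2n+1)\mathrm{Im}(\overline{\phi}_n\phi_{n+1})$, which Cauchy--Schwarz bounds by $A\sqrt{V_d(t)}+B$ with $A,B$ depending only on $\|\phi(0)\|_{l^2}$ and $h$, giving at worst a quadratic-in-$t$ growth that again excludes finite-time blow-up.
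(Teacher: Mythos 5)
Your proposal is correct and its core is exactly the paper's proof: rearrange the conserved Hamiltonian \eqref{H_DGPE} to isolate the trap term, bound the kinetic part by $4\sum_{\nZ}|\phi_n|^2$ and the quartic part via the $l^2\subset l^4$ (equivalently $l^2\subset l^\infty$) embedding, and use conservation of the power together with $|H_{NI}(0)|<\infty$ to conclude. The only difference is that the paper simply invokes the conservation law $H_{NI}(t)=H_{NI}(0)$ as known (citing \cite{JMP2011}) without addressing the point you rightly flag — that evaluating $H_{NI}(\phi(t))$ already presupposes $V_d(t)<\infty$ — so your truncation/lower-semicontinuity justification, and the backup differential inequality $\frac{d}{dt}V_d\le A\sqrt{V_d}+B$ that bypasses Hamiltonian conservation entirely (at the cost of only a quadratic-in-$t$ bound rather than a uniform one), are sound additions beyond what the paper provides.
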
	
\begin{proof}
The fact that $\phi(t)\in C^1([0,\infty),l^2)$ follows directly from \eqref{ht001} and the conservation of the power  \eqref{powerDGPE}. Next, from the conservation of the Hamiltonian  	$H_{NI}(t) =H_{NI}(0)$, we have that
\begin{eqnarray}
\label{ht1}
 \Omega^2 h^2\sum_{\nZ}(hn)^2|\phi_n|^2 \leq  \sum_{\nZ}|\phi_{n+1} -\phi_n|^2 +\frac{|g| h^2}{2}\sum_{\nZ}|\phi_n|^4+\left|H_{NI}(0)\right|.
\end{eqnarray}
Since $\phi(t)\in C^1([0,\infty),l^2)$, the first term of the right-hand side of \eqref{ht1} is uniformly bounded for all $t>0$, due to the inequality
\begin{eqnarray}
\label{ht2}
\sum_{\nZ}|\phi_{n+1} -\phi_n|^2\leq 4\sum_{\nZ}|\phi_n|^2.
\end{eqnarray}
The second term of the right-hand side of \eqref{ht1} is also uniformly bounded for all $t>0$, due to the embedding $l^2\subset l^4$ (see \eqref{ineq_l}). Also, by the assumption \eqref{ht01} we have that the third term  $|H_{NI}(0)|<\infty$, and hence, \eqref{htrapterm} holds.
\end{proof}	
Obviously, Proposition \ref{finHNI} is valid for initial data  $\phi(0)\in\l^{2,2}$ of the DGPE. Note that the limitations on $h$ described in Remark \ref{DPDE} are also valid for the case of  Proposition \ref{finHNI}.

Before proceeding to the statement and proof of the main result, we summarize the well-posedness results for the DGPE and AL lattices. The problem \eqref{GPE}-\eqref{Vbc} for the DGPE lattice is globally well-posed in time in $l^2$, with uniformly bounded solutions
for all initial data $\phi(0)\in l^2$.
However, the Hamiltonian may be unbounded if the initial data do not decay faster than quadratically, as dictated by the finite variance condition \eqref{ht01}. The same uniform boundedness of solutions holds for the problem \eqref{AL}-\eqref{Vbc} for the AL lattice in $l^2$.  Since the analytical solution \eqref{ALBS} of the latter, which we aim to examine for its potential persistence in the DGPE dynamics, is bounded in $l^2$ but may have an $l^{2,2}$-norm that  becomes unbounded (without singularities) in time, the  global existence results above- ensuring $l^2$-uniformly bounded solutions, suggest that  the $l^2$-metric is more suitable and less cumbersome for our initial investigations of the potential proximal dynamics. However, as we will see below, the presence of the harmonic trap term in the equation governing the difference of solutions between the two systems will necessitate considering at least the problem \eqref{AL}-\eqref{Vbc} for the AL lattice in $l^{2,2}$. While this problem remains globally well-posed in $l^{2,2}$ it generally lacks uniformly bounded solutions. Measuring the distance in weighted spaces is a problem of interdependent interest and will be considered elsewhere. 
 
The main result is proved in the following theorem.

\begin{theorem}
	\label{MR}
Let $g<0$ and $\mu>0$. We assume that the initial condition for the DGPE lattice \eqref{GPE} $\phi(0)\in l^2$, and that the initial condition for the AL lattice \eqref{AL}  $\psi(0)\in l^2_w=l^{2,2}$. Then, the distance between  the solutions of the systems $\|y(t)\|_{l^2}=\|\phi(t)-\psi(t)\|_{l^2}$, which satisfy the boundary conditions \eqref{Vbc},  grows transiently, with  at most an exponential rate according to the estimate 
\begin{eqnarray}\label{final1}
	\| y(t) \|_ {l^2}  &\le& \| y(0) \|_ {l^2}  + \mathcal{A}_1t+\mathcal{A}_2\exp\left(\Gamma t\right),\\ 
	\label{final1a}
	\mathcal{A}_1 &=&|g|  \|\phi(0)\|_ {l^2}^3  + 
	2\mu C_{\mu,0}\| \psi(0) \|_ {l^2}^3,\;\;
	\mathcal{A}_2=\frac{\Omega^2}{\Gamma} \| \psi(0) \|_{l^2_w},
\end{eqnarray}
for some  constant $C_{\mu,0}>0$, which is independent of $t$ and the initial data. In particular, the growth of the upper bound holds for finite time
\begin{eqnarray}
\label{finit}
t\in [0, t^*],\;\;\mbox{satisfying the inequality}\;\;\mathcal{A}_1t&+&\mathcal{A}_2\exp\left(\Gamma t\right)\leq \mathcal{B}-\|y(0)\|_{l^2},\\
\label{finit2}
\mathcal{B}[\phi(0),\psi(0)]&=&\|\phi(0)\|_{l^2}+C^*_{\mu,0}\|\psi(0)\|_{l^2},
\end{eqnarray}
where  $C^*_{\mu,0}>0$ is again independent of $t$ and the initial data. The quantity $\mathcal{B}[\phi(0),\psi(0)]$ is the global, uniform-in-time bound for $\|y(t)\|_{l^2}$, that is, $\|y(t)\|\leq \mathcal{B}$, for all $t>0$. 
\end{theorem}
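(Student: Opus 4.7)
The plan is to work directly with the difference equation for $y_n:=\phi_n-\psi_n$. Subtracting \eqref{AL} from \eqref{GPE} yields
\begin{equation*}
i\dot{y}_n + \frac{1}{h^2}\Delta_d y_n = \Omega^2(hn)^2\phi_n + g|\phi_n|^2\phi_n + \mu|\psi_n|^2(\psi_{n+1}+\psi_{n-1}).
\end{equation*}
I would then take the $l^2$ inner product of this identity with $y$ and keep only the imaginary part. Because $\Delta_d$ is self-adjoint on $l^2$, the dispersive contribution disappears, producing an evolution identity for $\tfrac{1}{2}\frac{d}{dt}\|y(t)\|_{l^2}^2$ comprised of three source terms, originating from the harmonic trap, the DGPE cubic nonlinearity, and the AL nonlinearity.

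The main obstacle is the trap term, since the prefactor $(hn)^2$ is unbounded and has no AL counterpart. The key trick is to split $\phi_n=y_n+\psi_n$ inside the trap sum: the self-interaction $h\sum_{\nZ}(hn)^2|y_n|^2$ is real and therefore killed by $\im$, leaving only the cross-term $\Omega^2\,\im\{h\sum_{\nZ}(hn)^2\psi_n\overline{y}_n\}$. I would then apply the Cauchy--Schwarz inequality with the $l^{2,2}$ weight $w_n=(1+(hn)^2)^2$, using $(hn)^4\le w_n$, to bound this quantity by $\Omega^2\|\psi(t)\|_{l^2_w}\|y(t)\|_{l^2}$. For the two cubic terms I would reuse the pattern already exploited in the proof of Theorem \ref{THS}: combining the embedding $\|\cdot\|_{l^\infty}\le\|\cdot\|_{l^2}$ from \eqref{ineq_l} with conservation of $\|\phi(t)\|_{l^2}$ for the DGPE and the uniform bound \eqref{UBSAL} for the AL lattice produces the contributions $|g|\|\phi(0)\|_{l^2}^3\|y\|_{l^2}$ and $2\mu C_{\mu,0}\|\psi(0)\|_{l^2}^3\|y\|_{l^2}$ respectively, with $C_{\mu,0}$ coming from repeated use of \eqref{UBSAL}.

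Dividing by $\|y(t)\|_{l^2}$ gives the scalar differential inequality
\begin{equation*}
\frac{d}{dt}\|y(t)\|_{l^2}\le \Omega^2\|\psi(t)\|_{l^2_w} + \mathcal{A}_1,
\end{equation*}
into which I would insert the exponential estimate $\|\psi(t)\|_{l^2_w}\le e^{\Gamma t}\|\psi(0)\|_{l^2_w}$ of Theorem \ref{THS} (applicable since the AL lattice is the case $\gamma=0$ of the Salerno lattice) and integrate on $[0,t]$. This produces precisely \eqref{final1}--\eqref{final1a}, the coefficient $\mathcal{A}_2=(\Omega^2/\Gamma)\|\psi(0)\|_{l^2_w}$ emerging from the integration of the exponential. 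Finally, the uniform-in-time bound $\mathcal{B}$ of \eqref{finit2} follows from the triangle inequality $\|y(t)\|_{l^2}\le\|\phi(t)\|_{l^2}+\|\psi(t)\|_{l^2}$ combined with conservation of the DGPE power \eqref{powerDGPE} and the AL bound \eqref{UBSAL}, giving $C^*_{\mu,0}=\sqrt{M_{\mu,0}}$; the transient growth estimate \eqref{final1} is then active precisely on the interval \eqref{finit}, namely the times at which its right-hand side has not yet reached $\mathcal{B}$.
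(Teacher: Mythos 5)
Your proposal is correct and follows essentially the same route as the paper: the splitting $\phi_n=y_n+\psi_n$ in the trap term is exactly the paper's add-and-subtract of $\Omega^2(hn)^2\psi_n$, the weighted Cauchy--Schwarz bound $\Omega^2\|\psi(t)\|_{l^2_w}\|y\|_{l^2}$ combined with the exponential estimate of Theorem \ref{THS} (for $\gamma=0$) is the paper's treatment of the harmonic-trap contribution, and the cubic terms, the integration of the scalar differential inequality, and the triangle-inequality bound $\mathcal{B}$ with $C^*_{\mu,0}=M_{\mu,0}^{1/2}$ all match the published argument.
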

\begin{proof}
	We start by subtracting the AL lattice \eqref{AL} from the DGPE lattice \eqref{GPE} to get the evolution equation for the difference of the solutions of the systems $y_n=\phi_n-\psi_n$, which reads as 
	\begin{equation}\label{main0}
		i \dot{y}_n  +\frac{1}{h^2}  \Delta_d y_n =  \Omega^2(hn)^2\phi_n + g|\phi_n|^2 \phi_n  + \mu |\psi_n|^2 (\psi_{n+1} - \psi_{n-1}).
	\end{equation}
	We apply the trick of simultaneously adding and subtracting the term $\Omega^2(hn)^2\psi_n$ in \eqref{main0} to avoid working directly with the distance $y$ in a weighted norm. Instead, we handle only the solution of the AL lattice $\psi$ in such a norm, by using Proposition \ref{SGB}, as will become evident below. With this trick, the equation \eqref{main0} is rewritten as
	\begin{equation}\label{main}
		i \dot{y}_n  +\frac{1}{h^2}  \Delta_d y_n =  \Omega^2(hn)^2y_n + g|\phi_n|^2 \phi_n  + \mu |\psi_n|^2 (\psi_{n+1} - \psi_{n-1}) + \Omega^2(hn)^2\psi_n.
	\end{equation}
	Multiplying \eqref{main} by $ h\overline{y}_n $, taking the sum over $n$  and keeping the  imaginary parts, we get the evolution equation for the distance in the $l^2$-metric $\|y(t)\|_{l^2}=\|\phi(t)-\psi(t)\|_{l^2}$:
	\begin{equation}\label{imeq}
		\frac{1}{2} \frac{d}{dt} \|y\|_{l^2}^2  =
		g\,\mathrm{Im} \Big\{h\sum_\nZ | \phi_n |^2 \phi_n \overline{y}_n\Big\} + 
		\mu\,\mathrm{Im} \Big\{h\sum_\nZ |\psi_n |^2 (\psi_{n-1}  + \psi_{n+1})\overline{y}_n\Big\} +
		\Omega^2\,\mathrm{Im} \Big\{h\sum_\nZ (hn)^2 \psi_n \overline{y}_n\Big\},
	\end{equation}
	noting that
	$\mathrm{Im}\Big\{ \sum_\nZ n^2 |y_n|^2\Big\}= 0$.
	Using  the embedding \eqref{ineq_l} and the Cauchy--Schwarz inequality, the first  and the second term of the right-hand side can be estimated as follows:
	\begin{eqnarray}
		\label{distest1}
		\Big| \mathrm{Im} \Big\{h\sum_\nZ |\phi_n|^2 \phi_n \overline{y}_n\Big\}\Big| &\le&  \|\phi\|_ {l^2}^3 \|y\|_{l^2},\\
		\label{distest2}
		\Big|\, \mathrm{Im} \Big\{h\sum_\nZ |\psi_n|^2 (\psi_{n-1} + \psi_{n+1}) \overline{y}_n\Big\} \Big|
		&\le& 2  \| \psi \|_ {l^2} ^ 3 \| y\| _ {l^2},
	\end{eqnarray}
	respectively. For the third term, the estimate will require a bound for the solution $\psi(t)\in l^2_w$, for all $t>0$, with the weight  $w_n = (1 + (hn)^2)^2$, that is, the weighted space $l_w^2=l^{2,2}$.  This bound is provided by  \eqref{expest} of Theorem \ref{THS}, for $\gamma=0$ which corresponds to the case of the AL lattice. Thus, applying the Cauchy--Schwarz inequality, we have that
	\begin{align}
		\label{distest3}
		\Big|\mathrm{Im} \Big\{h\sum_\nZ (hn)^2 \psi_n \overline{y}_n\Big\}\Big| & \le 
		h\sum_\nZ (1 + (hn)^2) |\psi_n| |\overline{y}_n|\nonumber \\ & \le 
		\|\psi\|_{l^2_w} \|y\|_{l^2} \nonumber\\ & \le
		\exp\left(\Gamma t\right)\| \psi(0) \|_{l^2_w} \|y\|_{l^2}.
	\end{align}
	Inserting the estimates \eqref{distest1}-\eqref{distest3} into equation \eqref{imeq}, we derive the differential inequality
	\begin{eqnarray}
		\label{distest4}
		\frac{1}{2} \frac{d}{dt} \|y\|_{l^2}^2  \le  
		|g|  \|\phi\|_ {l^2}^3 \|y\|_{l^2} + 
		2\mu  \| \psi \|_ {l^2} ^ 3 \| y\| _ {l^2} +
		\Omega^2	\exp\left(\Gamma t\right)\| \psi(0) \|_{l^2_w} \|y\|_{l^2}.
	\end{eqnarray}
In the first and second terms on the right-hand side of \eqref{distest4}, we apply the conservation of the $l^2$-norm \eqref{N} for $\phi(t)$ and the bound \eqref{UBSAL} for $\psi(t)$, respectively. The latter is applicable because $l^2_w = l^{2,2}$ is continuously embedded in $l^2$, as stated in \eqref{wemb}. Therefore, the assumption that $\psi(0) \in l^2_w$ implies that $\psi(0) \in l^2$, and for the corresponding solution measured in the $l^2$-norm, the estimate \eqref{UBSAL} holds. Then, \eqref{distest4} becomes
	\begin{eqnarray}
		\label{distest4a}
		\frac{1}{2} \frac{d}{dt} \|y(t)\|_{l^2}^2  \le  
		|g|  \|\phi (0)\|_ {l^2}^3 \|y(t)\|_{l^2} &+& 
		2\mu C_{\mu,0} \| \psi(0) \|_ {l^2} ^ 3 \| y(t)\| _ {l^2}\nonumber \\
		&+&
		\Omega^2\exp\left(\Gamma t\right)\| \psi(0) \|_{l^2_w}	\|y(t)\|_{l^2},\;\;C_{\mu,0}=M_{\mu,0}^{3/2}.
	\end{eqnarray}
	With the help of the identity
	$$
	\frac{d}{dt}  \|y\| _{l^2} ^ 2 = 2 \| y \| _ {l^2} \frac{d}{dt} \| y \|_ {l^2},
	$$
	the inequality \eqref{distest4a} gets the form
	\begin{eqnarray}
		\label{ineqfin}
		 \frac{d}{dt} \|y(t)\|_{l^2}  \le  
		|g|  \|\phi (0)\|_ {l^2}^3  + 
		2\mu C_{\mu,0} \| \psi(0) \|_ {l^2} ^ 3  +
		\Omega^2\exp\left(\Gamma t\right)\| \psi(0) \|_{l^2_w}.
	\end{eqnarray}
	Integration of  \eqref{ineqfin} with respect to time, provides the claimed  estimate \eqref{final1}, with growth rates $\mathcal{A}_1$ and $\mathcal{A}_2$ given by \eqref{final1a}, for $\|y(t)\|_{l^2}$.

Note that the estimate \eqref{final1} is {\em general, with respect to the initial data and parameters}.   It implies {\em at most, exponential growth} of the distance $\|y(t)\|_{l^2}$ with rate $\mathcal{A}_2$, which depends on the initial data and parameters as given in \eqref{final1a}.
We show now that this exponential growth rate is {only transient}. Due to the triangle inequality, we have that the distance $\|y(t)\|_{l^2}$ is { \em globally bounded,  uniformly in time}, by the norms of the initial conditions:
\begin{eqnarray}
\label{trans1}
\|y(t)\|_{l^2}&\leq& \|\phi(t)\|_{l^2}+\|\psi(t)\|_{l^2} \nonumber\\ &\leq& \|\phi(0)\|_{l^2}+C^*_{\mu,0}\|\psi(0)\|_{l^2}:=\mathcal{B}[\phi(0),\psi(0)],\;\;\;C^*_{\mu,0}=M_{\mu,0}^{1/2}.
\end{eqnarray}
Therefore, on the account of \eqref{final1}-\eqref{final1a}, it holds that
\begin{eqnarray}
\label{trans2}
||y(t)||_{l^2}\leq \| y(0) \|_ {l^2}  + \mathcal{A}_1t+\mathcal{A}_2\exp\left(\Gamma t\right)\leq \mathcal{B},\;\;\mbox{for all $t\geq 0$},
\end{eqnarray}
since the distance can never exceed its global  bound \eqref{trans1}, and any of its other bounds is less or equal than $\mathcal{B}$, for all $t\geq 0$. Therefore, any growth of the distance is only transient, and holds at most for $t\in [0, t^*]$, satisfying \eqref{trans2}.   This way we proved \eqref{finit} and \eqref{finit2}, and thus, completed the proof of the claims of the theorem.
\end{proof}
\paragraph{The case of the finite lattice with Dirichlet boundary conditions.} The case of the finite lattice is important for the numerical simulations. We consider the case of the Dirichlet boundary conditions. Then, the systems are considered in the finite dimensional spaces
\begin{equation*}
	{l}^p_{0}:=\bigg\{U=(U_n)_{n\in\mathbb{Z}}\in\mathbb{C}:\quad U_0=U_{N}=0,\quad
	\|U\|_{\ell^p_{0}}:=\bigg(h\sum_{n=1}^{N-1}|U_n|^p\bigg)^{\frac{1}{p}}<\infty\bigg\}, \quad 1\leq p\leq\infty,
\end{equation*}
for a finite lattice of $N+1$ nodes. In the finite dimensional setting all norms are equivalent. Thus, considering the weighted norm $l^2_w=l^2_2$, there exist  positive constant $c_2(N)$, such that
\begin{eqnarray}
	\label{normequiv}
\|\phi\|_{l^2_0}\leq \|\phi\|_{l^2_w} \leq c_2\|\phi\|_{l^2_0},\;\; c_2(N)\sim 1+h^2N^2,\;\; \quad \text{for all}\quad  \phi\in l^2_{0}.
\end{eqnarray}
The equivalence of norms \eqref{normequiv} modifies the upper bounds of Theorem \ref{MR} {\em to be linear}, according to  the following corollary. 
\begin{corollary}
	\label{MRF}
	Let $g<0$ and $\mu>0$.  Consider the finite DGPE and AL lattices in $l^2_0$, that is, supplemented with Dirichlet boundary conditions, and  their initial conditions $\phi(0),\psi(0)\in l^2_0$.  Then, the distance between  the solutions of the systems $\|y(t)\|_{l^2_0}=\|\phi(t)-\psi(t)\|_{l^2_0}$  grows transiently, with  at most a linear rate according to the estimate 
	\begin{eqnarray}\label{final1D}
		\| y(t) \|_ {l^2_0}  &\le& \| y(0) \|_ {l^2_0}  + \mathcal{A}t,\\ 
		\label{final1aD}
		\mathcal{A} &=&|g|  \|\phi(0)\|_ {l^2_0}^3  + 
		2\mu \tilde{C}_{\mu,0}\| \psi(0) \|_ {l^2_0}^3 +\Omega^2\tilde{C}_{1,\mu,0}\|\psi(0)\|_{l^2_0},\;\;
	\end{eqnarray}
	for some  constants $\tilde{C}_{\mu,0}>0,\,\tilde{C}_{1,\mu,0} $, which are independent of $t$ and the initial data. In particular, the growth of the upper bound holds for finite time
	\begin{eqnarray}
		\label{finitD}
		t\in [0, t^*],\;\; t^*&=&\frac{\mathcal{B}-\|y(0)\|_{l^2_0}}{\mathcal{A}},\\
		\label{finit2D}
		\mathcal{B}[\phi(0),\psi(0)]&=&\|\phi(0)\|_{l^2_0}+\tilde{C}^*_{1,\mu,0}\|\psi(0)\|_{l^2_0},
	\end{eqnarray}
	where  $\tilde{C}^*_{1,\mu,0}>0$ is again independent of $t$ and the initial data. The quantity $\mathcal{B}[\phi(0),\psi(0)]$ is the global, uniform-in-time bound for $\|y(t)\|_{l^2_0}$, that is, $\|y(t)\|_{l^2_0}\leq \mathcal{B}$, for all $t>0$. 
\end{corollary}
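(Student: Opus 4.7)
The plan is to follow the proof strategy of Theorem \ref{MR}, but to exploit the equivalence of norms \eqref{normequiv} available in the finite-dimensional setting. This equivalence replaces the exponentially growing bound on $\|\psi(t)\|_{l^2_w}$ coming from Theorem \ref{THS} with a time-independent one, so that every contribution to the right-hand side of the differential inequality for $\|y(t)\|_{l^2_0}$ becomes a constant depending only on the initial data and on $N$. Linear-in-$t$ growth of the upper bound will then follow by direct integration, and the trapping-region argument of \eqref{trans1}--\eqref{trans2} will transfer verbatim.

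First I would subtract the finite AL lattice from the finite DGPE lattice to derive the evolution equation for $y_n=\phi_n-\psi_n$, and again employ the trick of simultaneously adding and subtracting $\Omega^2(hn)^2\psi_n$ so that the harmonic-trap term acts on $\psi$ rather than on $y$. Multiplying by $h\overline{y}_n$, summing over $n=1,\dots,N-1$ (the Dirichlet boundary contributions vanish), and retaining imaginary parts gives the finite-lattice analogue of \eqref{imeq}. The two nonlinear contributions are then estimated exactly as in \eqref{distest1}--\eqref{distest2} via Cauchy--Schwarz and \eqref{ineq_l}, producing bounds of the form $|g|\|\phi\|_{l^2_0}^3\|y\|_{l^2_0}$ and $2\mu\|\psi\|_{l^2_0}^3\|y\|_{l^2_0}$. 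Conservation of the $l^2_0$-norm for $\phi$ together with the uniform bound \eqref{UBSAL} for $\psi$ (which carries over to the Dirichlet finite lattice with the same proof as in $l^2$) converts both into constants in the initial data.

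The decisive step is the handling of the harmonic-trap term. Applying Cauchy--Schwarz as in \eqref{distest3} produces $\Omega^2\|\psi(t)\|_{l^2_w}\|y(t)\|_{l^2_0}$, and at this point the norm equivalence \eqref{normequiv} supplies the time-independent estimate
\begin{equation*}
\|\psi(t)\|_{l^2_w}\le c_2(N)\,\|\psi(t)\|_{l^2_0}\le c_2(N)\,M_{\mu,0}^{1/2}\,\|\psi(0)\|_{l^2_0},
\end{equation*}
in place of the exponential factor $e^{\Gamma t}$. Setting $\tilde{C}_{1,\mu,0}:=c_2(N)\,M_{\mu,0}^{1/2}$ and $\tilde{C}_{\mu,0}:=M_{\mu,0}^{3/2}$, combining the three bounds, and using the identity $\tfrac{d}{dt}\|y\|_{l^2_0}^2=2\|y\|_{l^2_0}\tfrac{d}{dt}\|y\|_{l^2_0}$ yields a differential inequality whose right-hand side is a constant in $t$; straightforward integration from $0$ to $t$ then produces \eqref{final1D}--\eqref{final1aD}.

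Finally, the uniform-in-time bound \eqref{finit2D} follows verbatim from the triangle-inequality argument \eqref{trans1} with $\tilde{C}^*_{1,\mu,0}=M_{\mu,0}^{1/2}$, and $t^{*}$ in \eqref{finitD} is obtained by solving the linear equation $\|y(0)\|_{l^2_0}+\mathcal{A}\,t^{*}=\mathcal{B}$ and invoking the same argument as in the proof of Theorem \ref{MR} to conclude that the linear growth can persist only on $[0,t^{*}]$. I do not expect any substantive obstacle here beyond what was already treated in the infinite-lattice case; the only point worth flagging is that the constant $c_2(N)\sim 1+h^2N^2$, while finite and $t$-independent, grows with $N$, which is consistent with losing the $l^2_w$-machinery needed in the infinite-lattice setting to obtain a sharper (exponential-in-$t$ but $N$-independent) estimate.
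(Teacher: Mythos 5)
Your proposal is correct and follows essentially the same route as the paper's own proof: the harmonic-trap term is tamed by the norm equivalence \eqref{normequiv} combined with the finite-lattice version of the uniform bound \eqref{UBSAL}, yielding a constant-coefficient differential inequality, linear growth upon integration, and the trapping region via the triangle inequality. The only cosmetic difference is that the paper denotes the finite-lattice uniform-bound constant by $M_{\mu,0,f}$, whereas you reuse $M_{\mu,0}$ while noting the bound carries over.
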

 \begin{proof}
The equivalence of norms \eqref{normequiv} modifies the estimate \eqref{distest3} for the AL lattice. The bound \eqref{UBSAL} holds also for the finite lattice, that is,
\begin{equation}
	\label{UBSALD}
	\|\psi(t)\|_{l^2_0}^2\leq M_{\mu,0,f}\|\psi(0)\|_{l^2_0}^2,\;\;\mbox{for all $t>0$}.
\end{equation}
Then, instead of the inequality \eqref{distest3}, we have its counterpart
\begin{align}
	\label{distest3D}
	\Big|\mathrm{Im} \Big\{h\sum_{n=1}^{N-1} (hn)^2 \psi_n \overline{y}_n\Big\}\Big| & \le 
	h\sum_{n=1}^{N-1} (1 + (hn)^2) |\psi_n| |\overline{y}_n|\nonumber \\ & \le 
	\|\psi\|_{l^2_w} \|y\|_{l^2_0} \nonumber\\ & \le
c_2\| \psi \|_{l^2_0} \|y\|_{l^2_0}\le c_2 M_{\mu,0,f}^{1/2}
\| \psi(0) \|_{l^2_0} \|y\|_{l^2_0}.
\end{align}
By using \eqref{distest3D}, instead of \eqref{ineqfin}, we get the inequality
\begin{eqnarray}
	\label{ineqfinD}
	 \frac{d}{dt} \|y(t)\|_{l^2_0}  \le  
	|g|  \|\phi (0)\|_ {l^2_0}^3  + 
	2\mu \tilde{C}_{\mu,0} \| \psi(0) \|_ {l^2_0} ^ 3  +
	\Omega^2 \tilde{C}_{1,\mu,0}
	\| \psi(0) \|_{l^2_0},\;\; 
\end{eqnarray} 
with $\tilde{C}_{\mu,0}=M_{\mu,0,f}^{3/2}$ and $\tilde{C}_{1,\mu,0}=c_2 M_{\mu,0,f}^{1/2}$. Integration of \eqref{ineqfinD} with respect to time, leads to the estimate \eqref{final1D} with the linear growth rate \eqref{final1aD}. The global, uniform-in-time bound for the distance is
\begin{eqnarray}
	\label{trans1d}
	\|y(t)\|_{l^2_0}\leq  \|\phi(0)\|_{l^2_0}+C^*_{1,\mu,0}\|\psi(0)\|_{l^2_0}:=\mathcal{B}[\phi(0),\psi(0)],\;\;\;C^*_{1,\mu,0}=M_{\mu,0,f}^{1/2}.
\end{eqnarray} 
We conclude with the proof of \eqref{finitD}-\eqref{finit2D}, by following the last lines of the proof of Theorem \ref{MR}, this time, by using the upper bounds \eqref{final1D} and \eqref{trans1d}. 
 \end{proof}
\begin{figure}[tbp!]
	\centering 
	\begin{tabular}{cc}
		(a)&\hspace{0.7cm} (b)\\
		\includegraphics[scale=1]{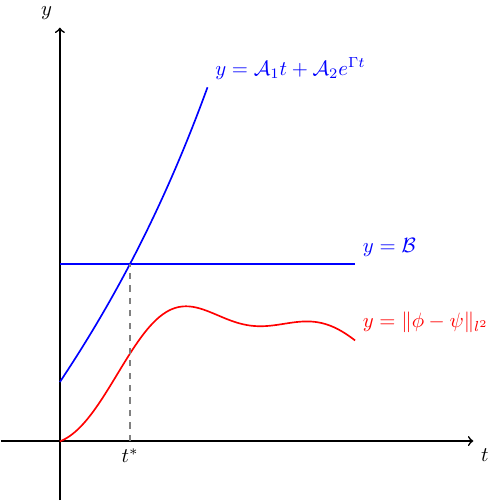}&
		\hspace{0.3cm}
		\includegraphics[scale=1]{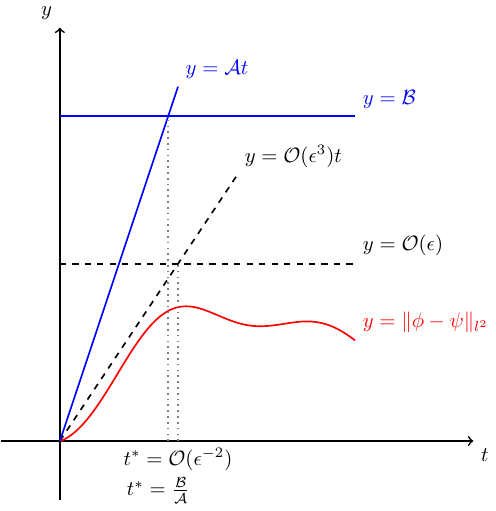}
	\end{tabular}
	\caption{
		\label{Fig1} (a) Geometrical interpretation of Theorem \ref{MR}. (b) Geometrical interpretation of Corollary \ref{MRF}. Details in the text (see Section \ref{SecIIIB}).}
\end{figure}
\begin{figure}[tbp!]
	\centering 
	\begin{tabular}{cc}
		(a)&\hspace{0.7cm}(b)\\	
		\includegraphics[scale=0.6]{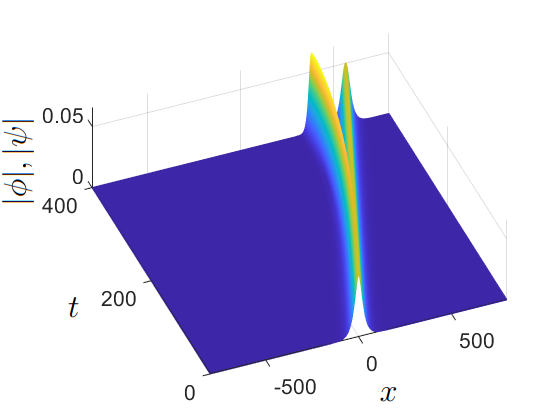}&
		\hspace{0.3cm}\includegraphics[scale=0.52]{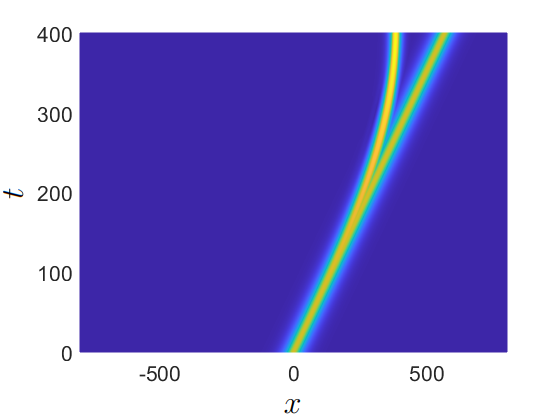}\\[5pt]
		(c)&\hspace{0.7cm}(d)\\
		\includegraphics[scale=0.5]{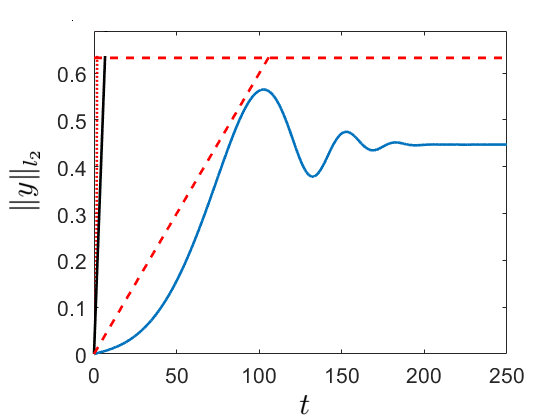}&
		\hspace{0.3cm}\includegraphics[scale=0.41]{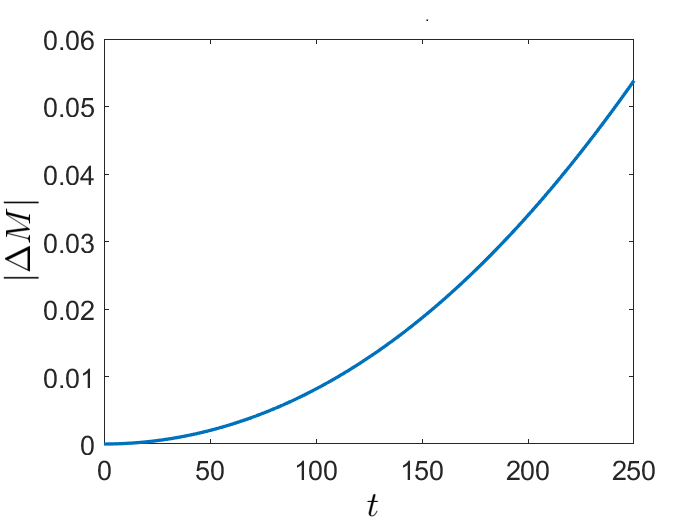}
	\end{tabular}
	\caption{
		\label{Fig2} (a) Spatiotemporal evolution   of $|\phi_n(0)|$ by the DGPE lattice, where $\phi_n(0)=\psi_n(0)$  is the initial condition defined by the analytical solution \eqref{ALBS},  against the same evolution by the AL lattice. The curved orbit of the soliton corresponds to the DGPE soliton while the straight orbit to the analytical AL soliton. (b) Contour plot of the evolution in panel (a). (c) The evolution of $\|y(t)\|_{l^2}$. (d) Evolution of the difference of the momenta $|\Delta M(t)|$ of the solutions.  Parameters: Amplitude of the initial condition $A=0.05$, $\Omega=0.002$. More details in text (see Section \ref{SecIV}-paragraph a.)}
\end{figure}
\begin{figure}[tbp!]
	\centering 
	\begin{tabular}{cc}
		(a)&\hspace{0.7cm}(b)\\	
		\includegraphics[scale=0.6]{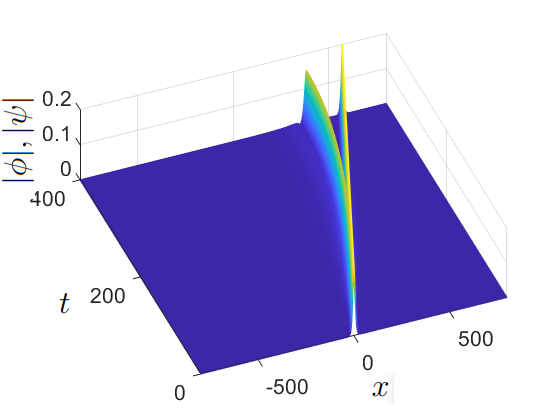}&
		\hspace{0.3cm}\includegraphics[scale=0.52]{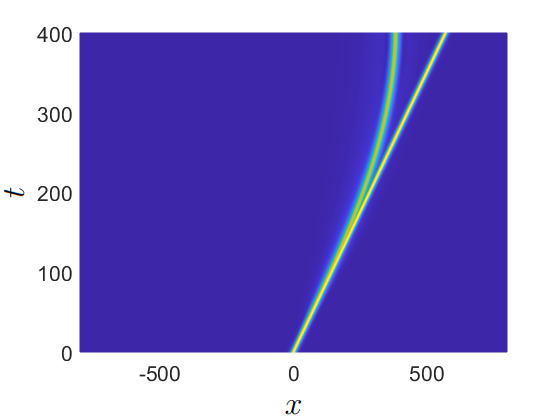}\\[5pt]
		(c)&\hspace{0.7cm}(d)\\
		\includegraphics[scale=0.5]{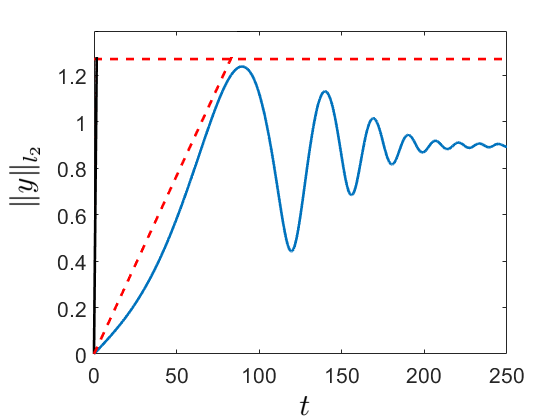}&
		\hspace{0.3cm}\includegraphics[scale=0.41]{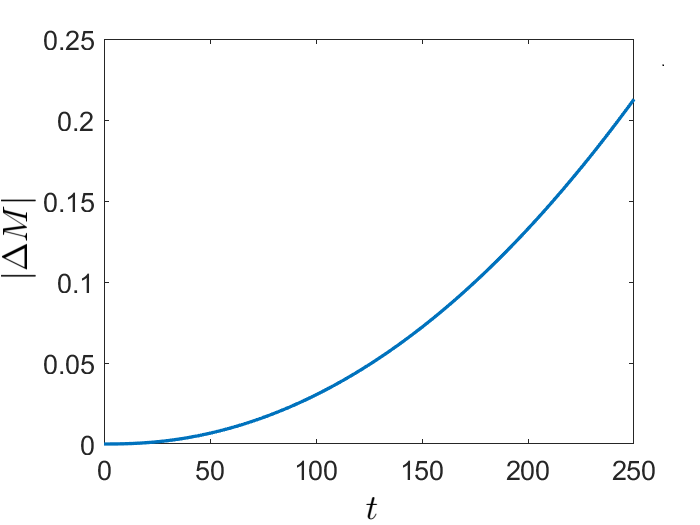}
	\end{tabular}
	\caption{
		\label{Fig3} The same numerical results as those presented in Figure \ref{Fig2}, but for $A=0.2$. More details in text (see Section \ref{SecIV}-paragraph b.)}
\end{figure}
\subsection{Understanding the dynamics of the distance}
\label{SecIIIB}
Theorem \ref{MR} for infinite lattices and Corollary \ref{MRF} for finite lattices provide insight into the potential deviation of the dynamics of the non-integrable DGPE system from those of the integrable AL system. First, this deviation is always finite and bounded by the system's initial conditions, as demonstrated by the global, uniform-in-time bounds \eqref{trans1} for the infinite lattice and \eqref{trans1d} for the finite lattice. Second, the deviation can grow only transiently, with at most an exponential rate for the infinite lattice and at most a linear rate for the finite lattice, as shown in estimates \eqref{final1}-\eqref{final1a} and \eqref{final1D}-\eqref{final1aD}, respectively. In other words, sub-exponential or sub-linear growth may occur transiently, as the relevant growth estimates increase only temporarily until they reach the global uniform bounds of the distance $\mathcal{B}$.

Thus, the geometric interpretation of Theorem \ref{MR} and Corollary \ref{MRF} is as follows: The growing upper bounds \eqref{final1}-\eqref{final1a} and \eqref{final1D}-\eqref{final1aD}, along with the global upper bounds $\mathcal{B}$, create a trapezoidal-like trapping region for the distance $\|y(t)\|_{\mathcal{X}}$ in the $(t,\|y\|_{\mathcal{X}})$-plane, where $\mathcal{X} = l^2$ for the infinite lattice and $\mathcal{X} = l^2_0$ for the finite lattice. This region is unbounded in the time direction since the solutions exist globally. The top horizontal side of the region is the global upper bound $\mathcal{B}$, while the bottom horizontal side is the $t$-axis. The left side of the trapezoidal region is formed by the graphs of the growing upper bounds \eqref{final1}-\eqref{final1a} for the infinite lattice or \eqref{final1D}-\eqref{final1aD} for the finite lattice. In the simplest case, where the systems start with identical initial conditions, i.e., $\|y(0)\|_{\mathcal{X}} = 0$, the graph for the infinite lattice is exponential, given by $y = \mathcal{A}_1t + \mathcal{A}_2 \exp\left(\Gamma t\right)$, while for the finite lattice, it is linear, $y = \mathcal{A}t$. These bounding curves are traced until they intersect the horizontal line $y = \mathcal{B}$ at times $t^*$, which are defined by \eqref{finit}-\eqref{finit2} for the infinite lattice and by \eqref{finitD}-\eqref{finit2D} for the finite lattice. They cannot extend above the horizontal line $y = \mathcal{B}$, since $\mathcal{B}$ represents the global upper bound of the distance.

This trapezoidal-like trapping region for the distance $\|y(t)\|_{\mathcal{X}}$ is illustrated in Figure \ref{Fig1}, with Panel (a) corresponding to Theorem \ref{MR} and Panel (b) to Corollary \ref{MRF}.

Additionally, let us remark the following: 
\begin{enumerate}
	\item
In the limit $\Omega \rightarrow 0$, where the DGPE lattice becomes the standard DNLS lattice, the left bounding curve becomes linear, since $\lim_{\Omega\rightarrow 0}\mathcal{A}_2 = 0$ (as follows from the expression of the coefficient $\mathcal{A}_2$ in \eqref{final1a}), eliminating the exponential term in the estimate. This case was examined in \cite{DJNa}, and also corresponds to the distance between the AL and generalized non-integrable AL lattices \cite{DJNb}, where the growing upper bounds were also linear. Another important example where the linear transient estimate is valid is the distance between the solutions of the integrable cubic NLS partial differential equation and the solutions of a broad class of non-integrable counterparts, supplemented with various boundary conditions in infinite or finite intervals \cite{JDE2024}.
\item In the case of the finite lattice, due to the presence of the constant $c_2(N)$ in the equivalence of norms \eqref{normequiv}, in the discrete regime $h=\mathcal{O}(1)$, we have that $\lim_{N\rightarrow\infty}c_2(N)=\infty$, and consequently, for the slope $\mathcal{A}(N)$ of the linear estimate $y=\mathcal{A}t$ (through the dependence of $\mathcal{A}$ on $\tilde{C}_{1,\mu,0}=c_2 M_{\mu,0,f}^{1/2}$), we have $\lim_{N\rightarrow\infty}\mathcal{A}(N)=\infty$. In this case, the trapezoidal region defined by Corollary \ref{MRF} is formed by the vertical axis $y$ and the horizontal line $y=\mathcal{B}$. Evidently, such a behavior is a consequence of the presence of the harmonic trap of the model and of the fact that the AL lattice, lacks uniformly bounded solutions in time, in $\ell^2_w$.  On the other hand, in the case of a continuous limit $h=\mathcal{O}(N^{-1})$ (with the limitations stated in Remark \ref{DPDE} for  the consideration of such a limit), the linear estimate $y=\mathcal{A}t$ becomes quantitatively relevant since $\mathcal{A}$ is finite. Moreover, for both discrete and continuous regimes, in the case where $\Omega\rightarrow 0$, we have that $\mathcal{A}=\mathcal{O}( \|\phi(0)\|_ {l^2_0}^3,  \|\psi(0)\|_ {l^2_0}^3)$. 
\end{enumerate}
The further understanding of the dynamics of the distance $\|y(t)\|_{\mathcal{X}}$ provided in this paper, motivated by the DGPE lattice, and the relevant results of \cite{DJNa, DJNb, JDE2024} strongly support the argument that the description visualized in Figure \ref{Fig1} is fairly generic. The interpretation of Figure \ref{Fig1} holds for a wide class of non-integrable discrete and continuous NLS-type systems relative to their integrable counterparts. It establishes that the deviation of the dynamics of relevant non-integrable systems from the dynamics of their integrable counterparts, measured in the norms of their phase spaces $\mathcal{X}$, cannot be infinite but is bounded, with bounds depending on the size of the initial data.

In the case where the size of the initial data is small in the norm of $\mathcal{X}$, specifically of the order $\mathcal{O}(\varepsilon)$, for $0<\varepsilon<1$,  the corresponding trapping regions become narrower. This is illustrated by the region bounded by the dashed black lines in Figure \ref{Fig1} (b). The example demonstrates a scenario where $\mathcal{B} = \mathcal{O}(\varepsilon)$, and the transient linear upper bound is given by $y=\mathcal{O}(\varepsilon^3)t$, which is relevant for $\Omega\rightarrow 0$. This case encompasses significant examples of Hamiltonian NLS-type systems encountered thus far, as the DNLS, generalized AL systems, and non-integrable 1D NLS partial differential equations.

In such narrower regions, and for significant periods of time, we may expect dynamics very close to those of the integrable counterpart, as the numerical studies in \cite{DJNa, DJNb, JDE2024} illustrated.  This will also hold for the DGPE lattice relative to the AL, as we will discuss in the next section.


\section{Numerical Results}
\label{SecIV}
In light of the theoretical results from Section \ref{SecIII}, we explore the potential proximity of wave dynamics between the DGPE and the AL lattice. We use the 1-soliton analytical solution of the AL lattice \eqref{ALBS} as initial conditions for the DGPE, focusing on small amplitudes and small values of $\Omega$. This approach allows us to numerically investigate the potential persistence of integrable bright soliton dynamics, as described by the analytical solution \eqref{ALBS} of the AL lattice, within the specified regime of initial conditions and harmonic trap strengths.

For the numerical studies, we fix the following parameters: $\alpha=\pi/4$, $\mu=1$ and $g=-1$ (focusing). We vary only the amplitude, denoted by the parameter $b$. We consider the systems in a finite lattice $[-800,800]$ with lattice spacing $h=1$. In the DGPE lattice, we set the small value $\Omega=0.002$.
\paragraph{The case of initial amplitude $A=0.05$.}
The initial condition defined by the soliton solution \eqref{ALBS} has the following norms: 
\begin{eqnarray}
	\label{nominis}
\varepsilon=	\|\psi(0)\|_{l^2}=0.316,\;\; \|\psi(0)\|_{l^2_w}=213.584.
\end{eqnarray}   
For the initial data and parameters as given above, with $\varepsilon=\mathcal{O}(10^{-1})$ and $\Omega=\mathcal{O}(10^{-3})$, the coefficients of the exponential bounding curve
 $y = \mathcal{A}_1t + \mathcal{A}_2 \exp\left(\Gamma t\right)$ for the trapping region are of the order $\mathcal{A}_1=\mathcal{O}(\epsilon^3)=\mathcal{O}(10^{-3})$ and $\mathcal{A}_2=\mathcal{O}(10^{-4})$. Specifically, we have $\Gamma = 4.2$, since $c_1=1$ for the given weight $w_n=(1+|hn|^2)^2$. Additionally, $\mathcal{B} \sim 2\varepsilon$, due to the fact that  $M_{\mu,0}=\mathcal{O}(1)$ with $M_{\mu,0} \lesssim 1$,  enabling for the approximation of the constants $C_{\mu,0} \sim 1$ and $C^*_{\mu,0} \sim 1$. The fact that the constant $M_{\mu,0}$ is of the above order for the given parameters and data, follows by applying Lemma \cite[Lemma 2.1 ]{DJNa} and the elementary inequality $\ln (1+x)\leq x$ for all $x>0$.
 Therefore, for the selected parameters, the trapezoidal trapping region is defined by the graphs of the functions 
\begin{eqnarray}
	\label{trap1}
y\sim 3\varepsilon^3t+	\frac{\Omega^2}{\Gamma} \| \psi(0) \|_{l^2_w}\exp\left(\Gamma t\right)=0.095t + 2\cdot10^{-4}\exp(4.2t)\;\;
\mbox{and the horizontal line}\;\;y=2\varepsilon=0.632.\;\;\;\;\;\;\;\;\;
\end{eqnarray}
Note that, due to the presence of the rapidly growing term $\exp(4.2t)$, the exponential bounding curve becomes steep after a short time. Thus, it is natural to consider the trapping region as formed by this exponential curve, which quickly becomes almost parallel to the vertical axis of the $(t, \|y\|)$-plane, and the horizontal line $y=2\varepsilon=0.632$. In this trapping region, with a height of order $\mathcal{O}(10^{-1})$, we may expect proximal dynamics between the systems for significant times.

Panels (a) and (b) of Figure \ref{Fig2} depict the dynamics of the AL soliton initial condition for $A=0.05$. The DGPE soliton follows a curved orbit $|\phi(t)|$, an effect of the nonzero $\Omega$, while the AL soliton follows a straight orbit $|\psi(t)|$. The solitons remain on similar orbits up to significant times, around $t \sim 200$. After this point, the orbits diverge due to the weak harmonic trap, which is strong enough to enforce the curved orbit of the DGPE soliton. Nevertheless, the DGPE soliton exhibits remarkable stability along its curved orbit. Panel (c) shows the evolution of the distance $\|y(t)\|_{l^2}$  represented by the continuous (blue) oscillating curve, further confirming the theoretical results, as $\|y(t)\|_{l^2}$, remains confined within the described trapping region \eqref{trap1} depicted a follows: The dotted (red) curve, close to the $\|y\|_{l^2}$-axis represents the steep exponential curve of \eqref{trap1}; in the plot it appears to be  depicted as almost linear.  The dashed (red) horizontal line represents the line $y=2\varepsilon$ of \eqref{trap1}.  The continuous (black) line which is less steeper than the dotted (red) curve, is the linear counterpart  $y=\mathcal{A}t=3\varepsilon^3t$ which corresponds to the case $\Omega\rightarrow 0$. We observe that the evolution of $\|y(t)\|_{l^2}$ demonstrates a transient, almost linear growth, at a much slower rate than the very small linear rate predicted by the bound $y=\mathcal{A}t=3\varepsilon^3t$, which explains the closely aligned dynamics observed at large times. Actually, the same numerical studies conducted in \cite{DJNa, DJNb, JDE2024} (results not shown here) confirmed that the rate of transient growth in the DGPE case  {\em remains of the order $\mathcal{O}(\varepsilon^{5})$}, which is close to the order identified in DNLS systems with $\Omega=0$; variations from this order may appear in short time intervals, within the time scales of the growth of $\|y(t)\|_{l^2}$, but never exceed the theoretical one of $\mathcal{O}(\varepsilon^3)$. To compare with the numerical rate of the transient linear divergence we also plotted  $y=\mathcal{A}_{\mathrm{num}}t$, with $\mathcal{A}_{\mathrm{num}}=3\varepsilon^5$, represented by the dashed (red) line and is close the  curve of the numerical distance $||y(t)||_{l^2}$. The global maximum of $\|y(t)\|_{l^2}$ is close to the predicted horizontal line $y=2\varepsilon=0.632$. The oscillations following this maximum and the eventual convergence to a constant value are due to the separation of the DGPE soliton from the AL soliton.  After their complete separation,  $\|y(t)\|_{l^2}$ becomes practically constant due to the conservation of the individual norms of the two solutions, see the discussion on Figure \ref{FigEX}.

Finally, recall that the momentum  $M(t)=2\mathrm{Im}\big\{\sum_{n\in\mathbb{Z}}\psi_n\overline{\psi}_{n+1}\big\}$ is conserved only in the AL lattice. By repeating the proof of \cite[Corollary III.1]{DJNb}, we can verify that, in the presence of the exponential bound described by \eqref{trap1},  the absolute value of the difference in momentum between the systems $|\Delta M|\sim\mathcal{O}(\varepsilon^3)$ for significant times. This provides further evidence of the proximity between the  dynamics of the two systems. The numerical results for the evolution $|\Delta M (t)|$, shown in panel (d) of Figure \ref{Fig2}, are consistent with this theoretical prediction. 

Note that in Figure \ref{Fig2},  we considered two different time scales for the plots. In the top panels (a) and (b), we considered times up to $t=400$, in order to visualize better the separation of the solitons, as well as the robust evolution of the DGPE soliton after the separation. In the bottom panels (c) and (d), we plotted the evolution of the distance $\|y(t)\|_{l^2}$ up to  $t=250$ because this is the actual time at which the separation of the tails of the solution occurs, and after that time, the comparison of the dynamics between the two systems is not meaningful. This way of depicting the numerical results will be followed also in the next studies. 
\paragraph{The case of initial amplitude $A=0.2$.} The initial condition defined by the soliton solution \eqref{ALBS} has the following norms: 
\begin{eqnarray}
	\label{nominis2}
	\varepsilon=	\|\psi(0)\|_{l^2}=0.634,\;\; \|\psi(0)\|_{l^2_w}=27.4.
\end{eqnarray}   
For these parameters, $\Gamma=4.8$ and  the trapping region is described by the graphs of the almost boxed trapping region, defined by the functions 
\begin{eqnarray}
	\label{trap2}
	y\sim 3\varepsilon^3t+	\frac{\Omega^2}{\Gamma} \| \psi(0) \|_{l^2_w}\exp\left(\Gamma t\right) = 0.77t + 2.28\cdot10^{-5}\exp(4.8t)\;\mbox{and the horizontal line}\;\;y=2\varepsilon=1.268.\;\;\;\;\;\;\;
\end{eqnarray}
The transient growth estimate predicts an increase in the growth rate of $\|y(t)\|_{l^2}$ and its global upper bound $\mathcal{B}$. Panel (c) of Figure \ref{Fig3} shows that the distance $\|y(t)\|{l^2}$ remains confined within the trapping region \eqref{trap2}, and the global upper bound $\mathcal{B}$ is shown to be quite sharp. While the transient growth rate of $\|y(t)\|_{l^2}$ increases, qualitatively aligning with the theoretical results, the numerical growth rate is still much smaller than the predicted theoretical value.  As a result, this increase of the  amplitude of the initial conditions does not significantly alter the proximal dynamics of the DGPE in relation to the AL lattice. The numerical results shown in Figure \ref{Fig3} closely resemble those reported for the previous case with $A=0.05$, as illustrated in the top panels (a) and (b) of Figure \ref{Fig3}, and panel (c) for $|\Delta M (t)|$. Amplitude variations within the regime of proximal dynamics are expected to be even smaller, due to the embedding \eqref{ineq_l}, which implies that $\|y(t)\|_{l^{\infty}}\leq \|y(t)\|_{l^2}$.
\begin{figure}[tbp!]
	\centering 
	\begin{tabular}{cc}
		(a)&\hspace{1cm}(b)\\	
		\includegraphics[scale=0.5]{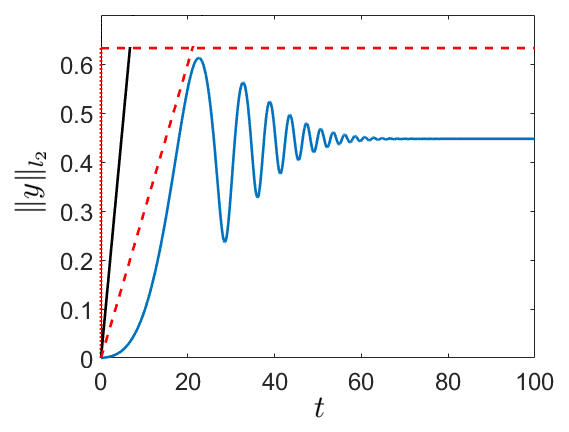}&\includegraphics[scale=0.5]{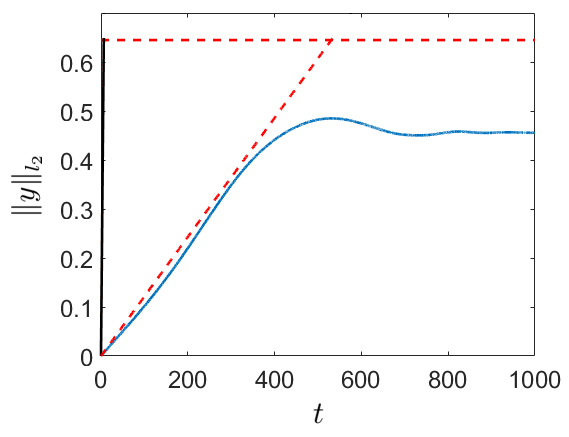}
	\end{tabular}
	\caption{
		\label{Fig5} The same numerical results as those presented in Figure \ref{Fig2}-panel (c), but for two different values of $h$. (a)  $h=0.1$.  (b) $h=10$. More details in text (see Section \ref{SecIV}-paragraph c).}
\end{figure}
\paragraph {Variations of the parameter $h$.} We conclude the numerical studies, by considering the effect of the variation of $h$ in the proximity dynamics. We consider two representative cases for variable $h$, namely $h=0.1$, which corresponds to the case of a continuous limit, and $h=10$, which corresponds to the case of a strong discrete regime, that is, the anti-continuous limit. For both cases, the amplitude is fixed to $A=0.05$, and the strength of the trap is small, $\Omega=0.002$. The results of these simulations, for the behavior of the distance $\|y(t)\|_{l^2}$ and its theoretical estimates, are depicted in Fig.~\ref{Fig5}.

Regarding the proximal dynamics, we observe that when $h=0.1$, the separation of the two solutions occurs faster, and the dynamics remain proximal for shorter times, up to $t\sim 100$. The faster separation of the orbits is a direct consequence of the dependence of the velocity $v$ of the soliton solution \eqref{ALBS} on $h$, according to $v\propto h^{-2}$. It is interesting to observe that the linear estimate $y=\mathcal{A}t=3\varepsilon^3t$ [represented again by the black (continuous) line] becomes quantitatively relevant, in agreement with the discussion in  \ref{SecIIIB}-comment 2: indeed, for  lattice paramaters as in this example with $N=\mathcal{O}(10^3)$ and $h=\mathcal{O}(10^{-1})$, the constant of equivalence of norms given in \eqref{normequiv} is  $c_2(N)\approx \tilde{C}_{1,\mu,0}=\mathcal{O}(10^4)$, while $\Omega^2=\mathcal{O}(10^{-6})$ and $\varepsilon= \| \psi(0)\| _{l^2}=\mathcal{O}(10^{-1})$.  Then, the theoretical rate of the linear, transient divergence given in \eqref{final1aD} is $\mathcal{A}=\mathcal{O}(10^{-3})=\mathcal{O}(\varepsilon^3)$. The numerical rate is found to be  $\mathcal{A}_{\mathrm{num}}=\mathcal{O}(\varepsilon^4)$.  The dashed (red) line to the left of the continuous (black) line represents $y=\mathcal{A}_{\mathrm{num}}t=3\varepsilon^4t$.  

Accordingly, when $h=10$, the separation of the solitons occurs much later, at $t\sim 400$. The linear estimate $y=\mathcal{A}t$ has a significantly large rate $\mathcal{A}$, yet, in agreement with the discussion in  \ref{SecIIIB}-comment 2, while the numerical rate $\mathcal{A}_{\mathrm{num}}=3\varepsilon^{6.9}$. This time, the line $y=\mathcal{A}_{\mathrm{num}}t=3\varepsilon^{6.9}t$ is almost tangent to the distance $\Vert y(t)\Vert _{l^2}$ for the time scales of its linear growth rate.

\section{Conclusions}
\label{SecV}
Continuing our investigation into the proximity of wave dynamics in non-integrable NLS-type models to those of their integrable counterparts \cite{DJNa,DJNb,JDE2024}, specifically the structural stability of the relevant integrable systems \cite{book2}, we explored the case of the focusing discrete Gross--Pitaevskii equation against the Ablowitz--Ladik lattice. This motivated a deeper study and understanding of the dynamics of the distance between solutions of these systems. The description appears to be quite general and applies to a wide class of non-integrable models.

When the norms of the solutions in the relevant phase spaces are bounded by quantities that depend on the norms of the initial conditions — a property shared by many systems in both discrete and continuous settings — the distance, measured in the associated metric, remains bounded. It is confined to regions defined by the global, uniform-in-time bound of the distance and by curves representing upper, time-dependent estimates for the transient growth of the distance, which cannot exceed the global bound. These trapezoidal-like regions are unbounded only in the time direction, as the solutions of the systems exist globally in time. In the regime of small data and parameters that may perturb the integrable model, these regions are predicted to be narrow. Within these regions, the dynamics of the integrable model are expected to persist in the non-integrable model over long periods, potentially for significant times. This persistence is due to the fact that the upper estimates for the transient growth of the distance suggest at most a linear increase. For initial data with norms of the order $\mathcal{O}(\varepsilon)$, where $0<\varepsilon<1$, the upper estimates for the transient growth of the distance, which in most cases are linear with respect to time, predict a small growth rate that is at most of the order $\mathcal{O}(\varepsilon^3)$.

The study of the discrete Gross--Pitaevskii equation introduced novel implications for the approach to proving theoretical estimates. Due to the presence of the harmonic trap, the analysis required the study of the AL lattice in weighted spaces, as the weighted norms for its solutions play a role in the evolution equation for the distance. It should be remarked that the problem of studying the Ablowitz--Ladik lattice in weighted spaces is important due to its strong connection with the applicability of the inverse scattering transform and the determination of various decaying properties of the solutions initiated by general initial data \cite{Yamane1}, \cite{China1}, \cite{Yamane2}, \cite{China2}.  With this approach, the transient growth estimates suggest sub-exponential growth for the distance in the case of an infinite lattice or sub-linear growth in the case of a finite lattice with Dirichlet boundary conditions. For small data and parameters, the previously mentioned trapping regions for the distance are predicted to be narrow, further suggesting the proximity of the dynamics between the systems.

Numerical studies with small initial data and strengths of the harmonic trap, aimed at assessing the potential persistence of the analytical Ablowitz--Ladik one soliton solution in the dynamics of the discrete Gross--Pitaevskii equations, illustrate agreement with the theoretical predictions. The numerical growth rate of the distance was found to be considerably smaller than the already small theoretical prediction, being approximately of the order $\mathcal{O}(\varepsilon^{5})$. Thus, the soliton excited by the discrete Gross--Pitaevskii equation, using the analytical bright soliton solution of the Ablowitz--Ladik lattice as its initial condition, exhibits remarkable proximity to the dynamics of the AL analytical solution over long times. An interesting finding is that even beyond these long times, when the influence of the weak harmonic trap becomes more significant, the soliton still demonstrates remarkable robustness along its curved orbit.

	The above results motivate further studies. One interesting direction is
investigating the potential proximity of dynamics for general localized initial
data, exploiting the detailed long-time asymptotics of the Ablowitz--Ladik
lattice~\cite{Yamane1} and the comparison of the systems under a potential common symplectic structure \cite{R2}. Another challenging area is the defocusing counterpart
with nonzero boundary conditions at infinity, which relates to the emergence of
dark solitons in the system. Similarly, the focusing case with nonzero boundary
conditions is also challenging due to its connection to rogue wave solutions \cite{R1,R4}.
Ongoing investigations explore these problems in both discrete and continuous
settings, with results to be reported in future work.
\begin{acknowledgments}
The paper is dedicated to the memory of Noel F.\,Smyth.	
\end{acknowledgments}
\vspace{0.5cm}
\noindent
\textbf{Declarations Statement}\\
The authors declare that they have no known competing financial interests or personal relationships that could have appeared to influence the work reported in this paper.\\
\\
\textbf{Authors Contributions Statement}\\
All authors contributed equally to the study conception, design and writing of the manuscript. Material preparation, data collection and analysis were performed equally by all authors.  All authors read and approved the final manuscript.

\end{document}